\pgfplotsset{compat=1.16}
\newcommand{\RR}{\mathbb{R}}
\newcommand{\NN}{\mathbb{N}}
\newcommand{\B}{\mathcal{B}}
\newcommand{\D}{\mathcal{D}}
\renewcommand{\P}{\mathcal{P}}
\newcommand{\R}{\mathcal{R}}
\newcommand{\V}{\mathcal{V}}
\newcommand{\p}{\mathsf{p}}
\newcommand{\q}{\mathsf{q}}
\renewcommand{\r}{\mathsf{r}}
\newcommand{\prev}{\mathrm{prev}}
\theoremstyle{plain}
\newtheorem{theorem}{Theorem}[section]
\newtheorem{lemma}[theorem]{Lemma}
\theoremstyle{definition}
\newtheorem{definition}[theorem]{Definition}
\newtheorem{example}[theorem]{Example}
\title{\textbf{Value-Offset Bifiltrations for Digital Images}}
\author{Anway De, Thong Vo, Matthew Wright }
\date{\today}
\begin{document}

\maketitle

\begin{abstract}
  Persistent homology, an algebraic method for discerning structure in abstract data, relies on the construction of a sequence of nested topological spaces known as a filtration.
  Two-parameter persistent homology allows the analysis of data simultaneously filtered by two parameters, but requires a bifiltration---a sequence of topological spaces simultaneously indexed by two parameters.
  To apply two-parameter persistence to digital images, we first must consider bifiltrations constructed from digital images, which have scarcely been studied.
  We introduce the \emph{value-offset bifiltration} for grayscale digital image data.
  We present efficient algorithms for computing this bifiltration with respect to the taxicab distance and for approximating it with respect to the Euclidean distance.
  We analyze the runtime complexity of our algorithms, demonstrate the results on sample images, and contrast the bifiltrations obtained from real images with those obtained from random noise.

\end{abstract}

\section{Introduction}

Persistent homology can be used to discern structure in data, including digital image data.
This requires constructing from the digital image a filtration, which is a nested sequence of topological spaces.
A typical approach involves a \emph{sublevel set filtration} constructed from a grayscale digital image.
Unfortunately, single-parameter persistent homology computed from a sublevel set filtration suffers from several shortcomings.
It is susceptible to noise, as a single pixel with a value much different from its neighbors can appear as a long-lived feature in persistent homology.
Sublevel set persistence is also not well equipped to detect the size of sublevel sets or the distance between components of a sublevel set.

These shortcomings arise because a single-parameter filtration is not a rich enough structure to simultaneously capture both the range of grayscale intensity values and the distance information found in digital images.
Instead, digital images are a natural application for two-parameter persistent homology, which allows for the analysis of data simultaneously filtered by two parameters.
Two-parameter persistence requires the construction of a bifiltration, but bifiltrations from digital images have hardly been explored.
In this paper, we lay groundwork for two-parameter persistent homology of digital images by proposing and analyzing a bifiltration that captures both intensity and distance information from digital images.

Our approach is to augment the threshold process described above, expanding and contracting each sublevel set by a sequence of offset distances to create a two-parameter filtration we call the \emph{value-offset bifiltration}.
This construction is a special case of the sublevelset-offset filtration described in Michael Lesnick's thesis \cite{LesnickThesis}.
The value-offset bifiltration captures both grayscale intensity and distance information from the digital image, thus better representing the structure of the digital image than sublevel set persistence alone.

\paragraph{Our Contribution.}
First, we define the \emph{value-offset bifiltration} for grayscale digital image data. This bifiltration records both grayscale intensity values and distances between pixels.
Second, we present an algorithm for computing the value-offset bifiltration from a digital image with respect to the taxicab distance, as well as a variant of our algorithm which approximates the bifiltration with respect to the Euclidean distance.
Third, we apply our algorithms to sample images and analyze the results in terms of both runtimes and the sizes of the resulting bifiltrations.
We also give worst-case example images that maximize the size of the value-offset bifiltrations and the runtime of our algorithms.
Our results shed light on the computation and characteristics of value-offset bifiltrations, paving the way for their future use in two-parameter persistent homology.

\paragraph{Related Work.}
Various researchers have applied single-parameter persistent homology to digital images, but few have considered the two-parameter setting.
For example, Bendich et al.\ \cite{Bendich}, Chung and Day \cite{ChungDay}, Tymochko et al.\  \cite{Tymochko}, and many others have used sublevel set filtrations to compute single-parameter persistence of digital image data.

As previously mentioned, our value-offset bifiltration is a special case of the sublevelset-offset filtration described in Michael Lesnick's thesis \cite{LesnickThesis}, but our contribution is quite different. In his thesis, Lesnick focuses on algebraic stability and interleavings, while this paper presents and analyzes algorithms.

We describe our bifiltration in terms of thickening and thinning operations on sets of pixels, which are similar to the dilation and erosion operations from mathematical morphology. Chung, Day, and Hu use dilation and erosion to obtain multifiltrations and then explore applications to image denoising \cite{ChungDayHu}.
However, our construction relies on notions of distance between pixels, rather than structuring elements, to expand and contract sets of pixels.

Unknown to us until the conclusion of our work, Hu et al.\ recently defined a bifiltration on grayscale digital images involving the distance transform \cite{HuEtAl}. While the construction by Hu et al.\ is similar to ours, their focus is quite different: Hu et al.\ use their bifiltration for single-parameter persistence computations; we focus algorithms for constructing our bifiltration and analysis of its properties.

\paragraph{Outline.}
We formally define the value-offset bifiltration, along with related concepts for its construction, in \Cref{prelim}. We then present our algorithms for computing the value-offset bifiltration with respect to the taxicab distance in \Cref{taxicabSection}, followed by our approximate computation with respect to the Euclidean distance in \Cref{euclideanSection}. 
In \Cref{resultsSection} we present our experimental results, including runtimes and properties of our bifiltrations computed from sample images.
We conclude in \Cref{futureWork} with discussion and directions for future work.

\section{Mathematical Preliminaries}\label{prelim}

\subsection{Images and Filtrations}

We regard a digital image as a finite set of pixels $\P \subset \NN^2$ along with a function $f \colon \P \to \RR$ that assigns a grayscale intensity/color value to each pixel.
Let $N = |\P|$ be the number of pixels in the image.
In this paper, we assume that the intensity values are integers; that is, $f \colon \P \to \NN$.
In practice, these values are often chosen from the set $\{0, 1, 2, \ldots, 255\}$.
Let $\mathcal{V} = \mathrm{im}(f)$ denote the set of all intensity values in the image.

Though we define pixels to be points in $\NN^2$, we visualize pixels as small squares (as in \cref{exampleBifiltration} and \cref{imageExample}).
We use lowercase sans-serif letters (such as $\p$ and $\q$) to denote pixels; this will help avoid confusion later on.
When we occasionally need to refer to the coordinates of a pixel, we use $x$ and $y$ subscripts to indicate the horizontal and vertical coordinates, such as $\p = (\p_x, \p_y)$.
When illustrating $\p$ as a small square, we regard the square as centered at the point $(\p_x, \p_y)$.

A \emph{sublevel set} is the set of pixels whose values are less than or equal to a specified constant $v$:
\[ f^-_v = \{\p \in \mathcal{P} \mid f(\p) \le v\}. \]
A \emph{filtration} is a nested sequence of topological spaces. For any increasing sequence of constants $v_0 < v_1 < \cdots < v_k$, we obtain the sublevel set filtration
\[ f^-_{v_0} \subset f^-_{v_1} \subset f^-_{v_2} \subset \cdots \subset f^-_{v_k}. \]
For example, a sublevel set filtration appears in the green box within \cref{exampleBifiltration} for a $4\times 4$ image. 
Filtrations such as these often serve as the input for computing single-parameter persistent homology (e.g., as is done in \cite{Bendich, ChungDay, Tymochko}).

\subsection{Thickening and Thinning}

We denote the \emph{distance} between any two pixels $\p$ and $\q$ generically by $d(\p, \q)$.
In this paper, we consider the following two definitions of distance.

The \emph{taxicab distance} between two pixels is the sum of the absolute differences between their coordinates:
\begin{equation*}
    d_T(\p, \q) = |\p_x - \q_x| + |\p_y - \q_y| 
\end{equation*}

The \emph{Euclidean distance} is the usual straight-line distance between (the centers of) pixels:
\begin{equation*}
    d_E(\p, \q) = \sqrt{ (\p_x - \q_x)^2 + (\p_y - \q_y)^2} 
\end{equation*}

Given a notion of distance, we define ``thickening'' and ``thinning'' operations on any set of pixels.
Each of these operations requires an offset value that determines by how much the set of pixels is to be thickened or thinned. We adopt the convention that positive offset parameters specify thickening, while negative offset parameters specify thinning. In this work, we are primarily interested in thickening and thinning sublevel sets of digital images.

Let $\mathcal{S} \subset \mathbb{N}^2$ be a set of pixels and $t$ a positive offset value.
\emph{Thickening} $\mathcal{S}$ by $t$ creates a new set $\mathcal{S}_t \supset \mathcal{S}$ consisting of all pixels whose distance from $\mathcal{S}$ is less than or equal to $t$.
For a value $v$ and an offset $t > 0$, the sublevel set $f^-_v$ thickened by $t$ is the set
\[ f^-_{v,t} = \{\p \mid d(\p, \q) \le t \text{ for some } \q \text{ with } f(\q) \le v \}. \]
In other words, the thickened set $f^-_{v,t}$ consists of those pixels $\p$ that are distance $t$ or less from a pixel with value $v$ or less.

In contrast, given a set of pixels $\mathcal{S}$ and a negative offset value $t$, \emph{thinning} $\mathcal{S}$ by $-t$ removes all pixels that are distance $-t$ or less from a pixel not in $\mathcal{S}$, creating a new set $\mathcal{S}_t \subset \mathcal{S}$.
For a value $v$ and offset $t < 0$, the sublevel set $f^-_v$ thinned by $-t$ is the set
\[ f^-_{v,t} = \{\p \mid f(\q) \le v \text{ for all } \q \text{ with } d(\p,\q) \le -t \}. \]
In other words, when $t<0$, $f^-_{v,t}$ is the thinned set consisting of those pixels $\p$ such that all pixels at distance $-t$ or less have value $v$ or less.

Furthermore, we define $f^-_{v,0}$ to be simply the sublevel set $f^-_v$.
Thus, for any value $v \in \mathcal{V}$ and any $t \in \mathbb{R}$, the set of pixels $f^-_{v,t}$ is obtained from the sublevel set $f^-_v$ by either thickening if $t > 0$, thinning if $t < 0$, or doing nothing if $t=0$.
Importantly, thickening and thinning produce a filtration: for any increasing sequence $t_0 < t_1 < \cdots < t_k$, we obtain a filtration
\[ f^-_{v,t_0} \subset f^-_{v,t_1} \subset f^-_{v,1t_2} \subset \cdots \subset f^-_{v,t_k}. \]

For an example, \cref{exampleBifiltration} illustrates thickening and thinning with respect to the taxicab distance. For a given value $v$, the set of pixels shaded in the $t=0$ row of \cref{exampleBifiltration} is expanded to include neighboring pixels horizontally and vertically, producing the set of pixels shaded in the $t=1$ row. The sets of pixels shaded in the $t=2$ and $t=3$ rows are obtained similarly.
Likewise, \cref{exampleBifiltration} illustrates thinning for negative offset parameters:
for a given value of $v$, consider the set of pixels shaded in the $t=0$ row of the figure, then remove all pixels that have an unshaded neighbor (horizontally or vertically) to obtain the set of pixels shaded in the $t=-1$ row. Again remove all pixels with an unshaded neighbor to obtain the set of pixels shaded in the $t=-2$ row.
Each column (i.e., fixed value $v$) in \cref{exampleBifiltration} gives an example of a thickening/thinning filtration.

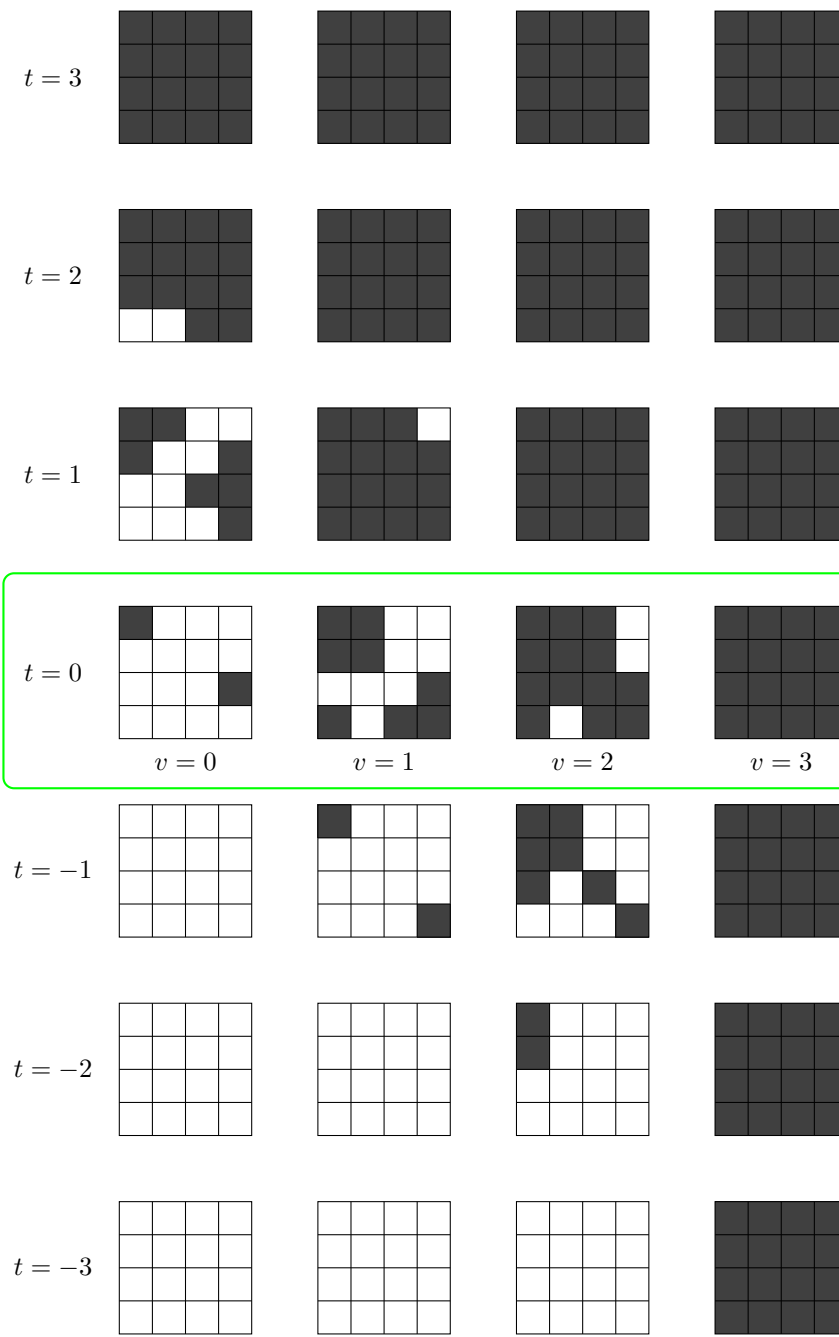
\begin{figure}[p]
  \centering
  \begin{tikzpicture}[scale=0.6]
    \fill[black!30] (0,1) rectangle ++(3,3); 
    \fill[black!60] (0,2) rectangle ++(2,2); 
    \fill[black!60] (0,0) rectangle ++(1,1); 
    \fill[black!60] (2,0) rectangle ++(2,1); 
    \fill[black!90] (0,3) rectangle ++(1,1); 
    \fill[black!90] (3,1) rectangle ++(1,1); 
    
    \node[white] at (0.5,3.5) {0};
    \node[white] at (3.5,1.5) {0};
    \foreach \p in {(0.5,0.5),(2.5,0.5),(3.5,0.5),(0.5,2.5),(1.5,2.5),(1.5,3.5)} {
      \node[white] at \p {1};
    }
    \foreach \p in {(0.5,1.5),(1.5,1.5),(2.5,1.5),(2.5,2.5),(2.5,3.5)} {
      \node at \p {2};
    }
    \foreach \p in {(1.5,0.5),(3.5,2.5),(3.5,3.5)} {
      \node at \p {3};
    }
    
    \draw (0,0) rectangle ++ (4,4);
    \foreach \i in {1, 2, 3}{
      \draw (\i,0) -- ++(0,4);
      \draw (0,\i) -- ++(4,0);
    }
\end{tikzpicture}
  
\vspace{4pt}
  
\noindent\rule{0.9\textwidth}{0.8pt}
  
\vspace{16pt}
  
\begin{tikzpicture}[scale=0.44]
    \foreach \x\y in {0/18, 6/12, 6/18, 12/6, 12/12, 12/18, 18/0, 18/6, 18/12, 18/18, 18/-6, 18/-12, 18/-18}{
      \fill[darkgray] (\x,\y) rectangle ++(4,4);
    }
    
    \fill[darkgray] (0,3) rectangle ++(1,1);
    \fill[darkgray] (3,1) rectangle ++(1,1);
    
    \fill[darkgray] (6,2) rectangle ++(2,2);
    \fill[darkgray] (6,0) rectangle ++(1,1);
    \fill[darkgray] (8,0) -- (10,0) -- (10,2) -- (9,2) -- (9,1) -- (8,1) -- cycle;
    
    \fill[darkgray] (12,0) rectangle ++(1,1);
    \fill[darkgray] (12,1) rectangle ++(3,3);
    \fill[darkgray] (14,0) rectangle ++(2,2);
    
    \fill[darkgray] (0,8) -- (1,8) -- (1,9) -- (2,9) -- (2,10) -- (0,10) -- cycle;
    \fill[darkgray] (3,6) -- (4,6) -- (4,9) -- (3,9) -- (3,8) -- (2,8) -- (2,7) -- (3,7) -- cycle;
    
    \fill[darkgray] (6,6) -- (10,6) -- (10,9) -- (9,9) -- (9,10) -- (6,10) -- cycle;
    
    \fill[darkgray] (2,12) -- (4,12) -- (4,16) -- (0,16) -- (0,13) -- (2,13) -- cycle;
    
    \fill[darkgray] (6,-3) rectangle ++(1,1);
    \fill[darkgray] (9,-6) rectangle ++(1,1);
    
    \fill[darkgray] (12,-4) rectangle ++(2,2);
    \fill[darkgray] (12,-5) rectangle ++(1,1);
    \fill[darkgray] (14,-5) rectangle ++(1,1);
    \fill[darkgray] (15,-6) rectangle ++(1,1);
    
    \fill[darkgray] (12,-10) rectangle ++(1,2);
    
    \foreach \x in {0, 6, 12, 18}{
      \foreach \y in {-18, -12, -6, 0, 6, 12, 18}{
        \draw (\x,\y) rectangle ++(4,4);
        \foreach \i in {1, 2, 3}{
          \draw (\x+\i,\y) -- ++(0,4);
          \draw (\x,\y+\i) -- ++(4,0);
        }
      }
    }
    
    \node at (-2,20) {$t=3$};
    \node at (-2,14) {$t=2$};
    \node at (-2,8) {$t=1$};
    \node at (-2,2) {$t=0$};
    \node at (-2,-4) {$t=-1$};
    \node at (-2,-10) {$t=-2$};
    \node at (-2,-16) {$t=-3$};
    
    \node at (2,-.7) {$v=0$};
    \node at (8,-.7) {$v=1$};
    \node at (14,-.7) {$v=2$};
    \node at (20,-.7) {$v=3$};
    
    \draw[green, rounded corners,thick] (-3.5,-1.5) rectangle (23,5);
    
\end{tikzpicture}
  
  \caption{A $4\times 4$ image with values $\{0,1,2,3\}$ is shown at top, and its value-offset bifiltration with respect to the taxicab distance is shown below. The highlighted row $(t=0)$ shows the sublevel set filtration for the digital image.}
  \label{exampleBifiltration}
\end{figure}

Observe that thinning can be thought of as thickening the complement of a set of pixels: for example, thinning the shaded region in \cref{exampleBifiltration} is the same as thickening the unshaded region. We elaborate on this below (see especially \cref{thinningAndComplement}).

\subsection{The Value-Offset Bifiltration}

Applied to a sublevel set filtration, the thickening and thinning operations produce a \emph{bifiltration}: a collection of subsets of pixels indexed by two parameters, with inclusions in the direction of increase of each parameter.
Concretely, since $f^-_{v,t}$ is the sublevel set $f^-_v$ thickened (or thinned) by positive (or negative) $t$, the value-offset bifiltration is depicted as follows:

\begin{center}
  \begin{tikzcd}[row sep=scriptsize]
    & \vdots & \vdots & \vdots & \\
    \cdots \ar[r,hook] & f^-_{v_0,t_2} \ar[r,hook] \ar[u,hook] & f^-_{v_1,t_2} \ar[r,hook] \ar[u,hook] & f^-_{v_2,t_2} \arrow[r,hook] \ar[u,hook] & \cdots \\
    \cdots \ar[r,hook] & f^-_{v_0,t_1} \ar[r,hook] \ar[u,hook] & f^-_{v_1,t_1} \ar[r,hook] \ar[u,hook] & f^-_{v_2,t_1} \arrow[r,hook] \ar[u,hook] & \cdots \\
    \cdots \ar[r,hook] & f^-_{v_0,t_0} \ar[r,hook] \ar[u,hook] & f^-_{v_1,t_0} \ar[r,hook] \ar[u,hook] & f^-_{v_2,t_0} \arrow[r,hook] \ar[u,hook] & \cdots \\
    & \vdots \ar[u,hook] & \vdots \ar[u,hook] & \vdots \ar[u,hook] & 
  \end{tikzcd}
\end{center}

\cref{exampleBifiltration} shows a value-offset bifiltration with respect to the taxicab distance for a $4\times 4$ digital image.
Note that each row and column of the figure is itself a (single-parameter) filtration.
In the figure, note that any pixel $\p$ shaded at a $(v,t)$ pair is also shaded at all other $(v',t')$ with $v \le v'$ and $t \le t'$.

\begin{definition}
    The \emph{value-offset bifiltration} is the collection $\{f^-_{v,t}\}_{v,t}$.
    We refer to an index pair $(v,t)$ as a \emph{bigrade}.
\end{definition}
We denote bigrades using lowercase bold letters, such as \textbf{b}, to distinguish bigrades from pixels.

The value-offset bifiltration can be viewed through either a continuous or discrete perspective.
From the continuous perspective, a bigrade may be any pair of real numbers; i.e., $(v,t) \in \RR^2$.
However, since a digital image contains finitely many pixels, a value-offset bifiltration involves only finitely many distinct sets $f^-_{v,t}$.
It suffices to adopt a discrete perspective in which all bigrades $(v,t)$ satisfy $v \in \V$ and either $t \in \D$ or $-t \in \D$, where $\D$ is the set of all distances between pixels in the image (see, for example, \cref{exampleBifiltration}).
The discrete perspective is particularly important for computational purposes.
In this paper, we use the discrete perspective unless otherwise noted.

\begin{definition}\label{presentPos}
    We say that pixel $\p$ is \emph{present} at bigrade $\mathbf{b} = (v,t)$ if $\p \in f^-_{v,t}$.
    Let $\R^+_\p$ denote set of all bigrades with $t \ge 0$ at which $\p$ is present.
    Similarly, let $\R^-_\p$ denote the set of all bigrades with $t \le 0$ at which $\p$ is present.
    Let $\R_\p = \R^-_\p \cup \R^+_\p$ denote the set of all bigrades at which $\p$ is present.
\end{definition}

In words, $\p$ is present at $(v,0)$ if $\p$ has value $f(\p) \le v$.
Pixel $\p$ is present at $(v,t)$ with $t > 0$ if there exists a pixel $\q$ such that $f(\q) \le v$ and $d(\p, \q) \le t$. 
The set of bigrades $\R^+_\p$ results from the thickening process.
Likewise, $\p$ is present at $(v,t)$ with $t < 0$ if for all pixels $\q$ such that $d(\p, \q) \le -t$, $f(\q) \le v$.
The set of bigrades $\R^-_\p$ results from the thinning process.

\begin{example} 
    Let $\p$ be the pixel in the lower-left corner of the $6 \times 6$ image in \cref{imageExample}.
    \Cref{imageExample} (a) highlights all pixels at taxicab distance $5$ or less from $\p$. Since there is a pixel of value $0$ at taxicab distance $5$ from $\p$, pixel $\p$ is present at bigrade $(0,5)$ in the value-offset bifiltration with respect to the taxicab distance.
    \Cref{imageExample} (b) highlights all pixels at taxicab distance $3$ or less from $\p$. Since these highlighted pixels have no value greater than $5$, pixel $\p$ is present at bigrade $(5,-3)$ in the same bifiltration.
    Lastly, \Cref{imageExample} (c) highlights all pixels at Euclidean distance $\sqrt{13}$ or less from $\p$. Because these highlighted pixels include some with value $5$ but none with value greater than $5$, pixel $\p$ is present at bigrade $(5,-\sqrt{13})$ but not at bigrade $(4, -\sqrt{13})$ in the value-offset bifiltration with respect to the Euclidean distance.
\end{example}

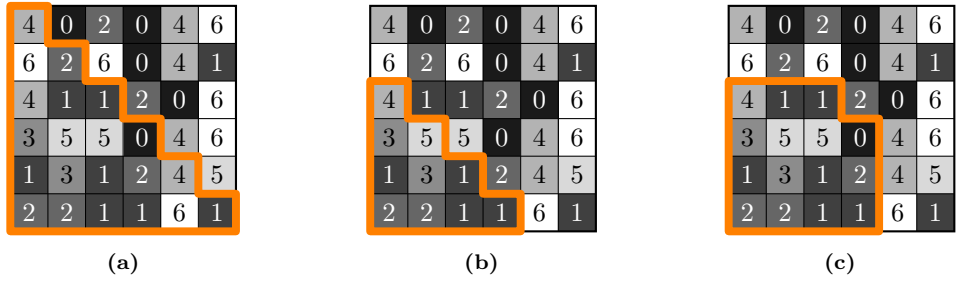
\begin{figure}[t]
        \centering
    \begin{subfigure}[b]{0.3\textwidth}
      \centering
      \begin{tikzpicture}[scale=0.5] 
        \fill[black!90] (1,2) rectangle (5,6); 
        \fill[black!75] (0,0) rectangle (6,2); 
        \fill[black!75] (1,3) rectangle (3,4); 
        \fill[black!75] (5,4) rectangle (6,5); 
        \fill[black!60] (0,0) rectangle (2,1); 
        \fill[black!60] (3,1) rectangle ++(1,1); 
        \fill[black!60] (3,3) rectangle ++(1,1); 
        \fill[black!60] (1,4) rectangle ++(1,1); 
        \fill[black!60] (2,5) rectangle ++(1,1); 
        \fill[black!45] (0,2) rectangle ++(1,1); 
        \fill[black!45] (1,1) rectangle ++(1,1); 
        \fill[black!30] (0,3) rectangle ++(1,1); 
        \fill[black!30] (0,5) rectangle ++(1,1); 
        \fill[black!30] (4,1) rectangle ++(1,2); 
        \fill[black!30] (4,4) rectangle ++(1,2); 
        \fill[black!15] (1,2) rectangle ++(2,1); 
        \fill[black!15] (5,1) rectangle ++(1,1); 
        \fill[white] (2,4) rectangle ++(1,1); 
        \fill[white] (4,0) rectangle ++(1,1); 
        
        \draw[thick] (0,0) rectangle (6,6);
        \foreach \i in {1,...,5} {
          \draw (\i,0) -- (\i,6);
          \draw (0,\i) -- (6,\i);
        }
        
        \foreach \p in {(1.5,5.5),(3.5,2.5),(3.5,4.5),(3.5,5.5),(4.5,3.5)} {
          \node[white] at \p {0};
        }
        \foreach \p in {(0.5,1.5),(1.5,3.5),(2.5,0.5),(2.5,1.5),(2.5,3.5),(3.5,0.5),(5.5,0.5),(5.5,4.5)} {
          \node[white] at \p {1};
        }
        \foreach \p in {(0.5,0.5),(1.5,0.5),(1.5,4.5),(2.5,5.5),(3.5,1.5),(3.5,3.5)} {
          \node[white] at \p {2};
        }
        \foreach \p in {(0.5,2.5),(1.5,1.5)} {
          \node at \p {3};
        }
        \foreach \p in {(0.5,3.5),(0.5,5.5),(4.5,1.5),(4.5,2.5),(4.5,4.5),(4.5,5.5)} {
          \node at \p {4};
        }
        \foreach \p in {(1.5,2.5),(2.5,2.5),(5.5,1.5)} {
          \node at \p {5};
        }
        \foreach \p in {(0.5,4.5),(2.5,4.5),(4.5,0.5),(5.5,2.5),(5.5,3.5),(5.5,5.5)} {
          \node at \p {6};
        }
        
        \draw[orange, line width=3pt, line join=round] (0,0) -- (6,0) -- (6,1) -- (5,1) -- (5,2) -- (4,2) -- (4,3) -- (3,3) -- (3,4) -- (2,4) -- (2,5) -- (1,5) -- (1,6) -- (0,6) -- cycle;
      \end{tikzpicture}
      \caption{}
    \end{subfigure}
    \begin{subfigure}[b]{0.3\textwidth}
      \centering
            \begin{tikzpicture}[scale=0.5] 
        \fill[black!90] (1,2) rectangle (5,6); 
        \fill[black!75] (0,0) rectangle (6,2); 
        \fill[black!75] (1,3) rectangle (3,4); 
        \fill[black!75] (5,4) rectangle (6,5); 
        \fill[black!60] (0,0) rectangle (2,1); 
        \fill[black!60] (3,1) rectangle ++(1,1); 
        \fill[black!60] (3,3) rectangle ++(1,1); 
        \fill[black!60] (1,4) rectangle ++(1,1); 
        \fill[black!60] (2,5) rectangle ++(1,1); 
        \fill[black!45] (0,2) rectangle ++(1,1); 
        \fill[black!45] (1,1) rectangle ++(1,1); 
        \fill[black!30] (0,3) rectangle ++(1,1); 
        \fill[black!30] (0,5) rectangle ++(1,1); 
        \fill[black!30] (4,1) rectangle ++(1,2); 
        \fill[black!30] (4,4) rectangle ++(1,2); 
        \fill[black!15] (1,2) rectangle ++(2,1); 
        \fill[black!15] (5,1) rectangle ++(1,1); 
        \fill[white] (2,4) rectangle ++(1,1); 
        \fill[white] (4,0) rectangle ++(1,1); 
        
        \draw[thick] (0,0) rectangle (6,6);
        \foreach \i in {1,...,5} {
          \draw (\i,0) -- (\i,6);
          \draw (0,\i) -- (6,\i);
        }
        
        \foreach \p in {(1.5,5.5),(3.5,2.5),(3.5,4.5),(3.5,5.5),(4.5,3.5)} {
          \node[white] at \p {0};
        }
        \foreach \p in {(0.5,1.5),(1.5,3.5),(2.5,0.5),(2.5,1.5),(2.5,3.5),(3.5,0.5),(5.5,0.5),(5.5,4.5)} {
          \node[white] at \p {1};
        }
        \foreach \p in {(0.5,0.5),(1.5,0.5),(1.5,4.5),(2.5,5.5),(3.5,1.5),(3.5,3.5)} {
          \node[white] at \p {2};
        }
        \foreach \p in {(0.5,2.5),(1.5,1.5)} {
          \node at \p {3};
        }
        \foreach \p in {(0.5,3.5),(0.5,5.5),(4.5,1.5),(4.5,2.5),(4.5,4.5),(4.5,5.5)} {
          \node at \p {4};
        }
        \foreach \p in {(1.5,2.5),(2.5,2.5),(5.5,1.5)} {
          \node at \p {5};
        }
        \foreach \p in {(0.5,4.5),(2.5,4.5),(4.5,0.5),(5.5,2.5),(5.5,3.5),(5.5,5.5)} {
          \node at \p {6};
        }
        
        \draw[orange, line width=3pt, line join=round] (0,0) -- (4,0) -- (4,1) -- (3,1) -- (3,2) -- (2,2) -- (2,3) -- (1,3) -- (1,4) -- (0,4) -- cycle;
      \end{tikzpicture}
      \caption{}
    \end{subfigure}
    \begin{subfigure}[b]{0.3\textwidth}
      \centering
            \begin{tikzpicture}[scale=0.5] 
        \fill[black!90] (1,2) rectangle (5,6); 
        \fill[black!75] (0,0) rectangle (6,2); 
        \fill[black!75] (1,3) rectangle (3,4); 
        \fill[black!75] (5,4) rectangle (6,5); 
        \fill[black!60] (0,0) rectangle (2,1); 
        \fill[black!60] (3,1) rectangle ++(1,1); 
        \fill[black!60] (3,3) rectangle ++(1,1); 
        \fill[black!60] (1,4) rectangle ++(1,1); 
        \fill[black!60] (2,5) rectangle ++(1,1); 
        \fill[black!45] (0,2) rectangle ++(1,1); 
        \fill[black!45] (1,1) rectangle ++(1,1); 
        \fill[black!30] (0,3) rectangle ++(1,1); 
        \fill[black!30] (0,5) rectangle ++(1,1); 
        \fill[black!30] (4,1) rectangle ++(1,2); 
        \fill[black!30] (4,4) rectangle ++(1,2); 
        \fill[black!15] (1,2) rectangle ++(2,1); 
        \fill[black!15] (5,1) rectangle ++(1,1); 
        \fill[white] (2,4) rectangle ++(1,1); 
        \fill[white] (4,0) rectangle ++(1,1); 
        
        \draw[thick] (0,0) rectangle (6,6);
        \foreach \i in {1,...,5} {
          \draw (\i,0) -- (\i,6);
          \draw (0,\i) -- (6,\i);
        }
        
        \foreach \p in {(1.5,5.5),(3.5,2.5),(3.5,4.5),(3.5,5.5),(4.5,3.5)} {
          \node[white] at \p {0};
        }
        \foreach \p in {(0.5,1.5),(1.5,3.5),(2.5,0.5),(2.5,1.5),(2.5,3.5),(3.5,0.5),(5.5,0.5),(5.5,4.5)} {
          \node[white] at \p {1};
        }
        \foreach \p in {(0.5,0.5),(1.5,0.5),(1.5,4.5),(2.5,5.5),(3.5,1.5),(3.5,3.5)} {
          \node[white] at \p {2};
        }
        \foreach \p in {(0.5,2.5),(1.5,1.5)} {
          \node at \p {3};
        }
        \foreach \p in {(0.5,3.5),(0.5,5.5),(4.5,1.5),(4.5,2.5),(4.5,4.5),(4.5,5.5)} {
          \node at \p {4};
        }
        \foreach \p in {(1.5,2.5),(2.5,2.5),(5.5,1.5)} {
          \node at \p {5};
        }
        \foreach \p in {(0.5,4.5),(2.5,4.5),(4.5,0.5),(5.5,2.5),(5.5,3.5),(5.5,5.5)} {
          \node at \p {6};
        }
        
        \draw[orange, line width=3pt, line join=round] (0,0) -- (4,0) -- (4,3) -- (3,3) -- (3,4) -- (0,4) -- cycle;
      \end{tikzpicture}
      \caption{}
    \end{subfigure}
    
    \caption{Let $\p$ be the lower-left pixel of this $6\times 6$ image, which is the same in all three frames. The region highlighted in (a) contains pixels of taxicab distance $5$ or less from $\p$. In (b), the highlighted region contains pixels of taxicab distance $3$ or less from $\p$. In (c), the highlighted region contains pixels of Euclidean distance $\sqrt{13}$ or less from $\p$.}
    \label{imageExample}
\end{figure}

Computing multiparameter persistent homology of the value-offset bifiltration requires first determining, for each pixel $\p$ in a given image, all pairs $(v,t)$ at which pixel $\p$ is present in the bifiltration; i.e., $\p \in f^-_{v,t}$.
However, this problem can be reduced to computing an ``entrance set'' for each pixel, as we explain next.

\subsection{Entrance Points and Entrance Sets}

We employ the \emph{partial order} $\preceq$ on bigrades given by
$\mathbf{a} = (v_\mathbf{a}, t_\mathbf{a}) \preceq (v_\mathbf{b}, t_\mathbf{b}) = \mathbf{b}$ if $v_\mathbf{a} \le v_\mathbf{b}$ and $t_\mathbf{a} \le t_\mathbf{b}$.
Furthermore, $\mathbf{a} \prec \mathbf{b}$ if $\mathbf{a} \preceq \mathbf{b}$ and $\mathbf{a} \ne \mathbf{b}$.
Two bigrades $\mathbf{a}$ and $\mathbf{b}$ are \emph{incomparable} if $\mathbf{a} \not\preceq \mathbf{b}$ and $\mathbf{b} \not\preceq \mathbf{a}$.
Bigrades may also be compared by a single coordinate; for this we write $\mathbf{a} <_v \mathbf{b}$ if $v_\mathbf{a} < v_\mathbf{b}$ or $\mathbf{a} <_t \mathbf{b}$ if $t_\mathbf{a} < t_\mathbf{b}$.

If a pixel $\p$ is present at bigrade $\mathbf{a}$, then $\p$ is also present at any bigrade $\mathbf{b}$ with $\mathbf{a} \preceq \mathbf{b}$.
This observation implies that $\R_\p$ has a ``stairstep'' shape, as illustrated in \cref{StairstepExample1}, and can be described by a set of minimal points in the following sense.

\begin{definition}
    A bigrade $\textbf{m}$ is \emph{minimal} in a set of bigrades $\mathcal{S}$ if there is no bigrade $\textbf{s} \in \mathcal{S}$ such that $\textbf{s} \prec \textbf{m}$.
    Similarly, a bigrade $\textbf{m}$ is a \emph{maximal} in a set of bigrades $\mathcal{S}$ if there does not exist $\textbf{s} \in \mathcal{S}$ such that $\textbf{m} \prec \textbf{s}$.
\end{definition}

Note that a set $\mathcal{S}$ may have many minimal (or maximal) bigrades, all mutually incomparable.

We now define several sets of ``entrance points'' that our algorithms will compute for each pixel.

\begin{definition}\label{entranceDef}
    The set of \emph{positive entrance points} $\B^+_\p$ associated with pixel $\p$ is the set of minimal bigrades in $\R^+_\p$. 
    The set of \emph{negative entrance points} $\B^-_\p$ associated with pixel $\p$ is the set of minimal bigrades in $\R^-_\p$ along with $(\max(\V),-\infty)$. 
    The set of \emph{entrance points}, or \emph{entrance set}, $\B_\p$ associated with pixel $\p$ is the set of minimal bigrades in $\B_\p^+ \cup \B_\p^-$.
\end{definition}

Intuitively, bigrade $(v,t)$ is an entrance point for pixel $\p$ if $\p$ is present at $(v,t)$ but not present at $(v',t)$ with $v'<v$ or at $(v,t')$ with $t'<t$.
The set of positive entrance points $\B^+_\p$ uniquely determines the bigrades at which $\p$ is present as a result of the thickening process; we note that $(f(\p),0)$ is always a positive entrance point for pixel $\p$.
Similarly, the set of negative entrance points uniquely determines the bigrades at which $\p$ is present as a result of the thinning process.
We define $(\max(\V),-\infty)$ to be a negative entrance point since pixel $\p$ is present at $(\max(\V),t)$ for all values $t$.
\cref{StairstepExample1} illustrates the entrance points for the lower-left pixel in \cref{imageExample}.

\begin{figure}[h]
    \centering
        \begin{tikzpicture}[scale=0.6]
            
            \fill[green!50!white] (0,5.8) -- (0,5) -- (1,5) -- (1,1) -- (2,1) -- (2,0) -- (6.8,0) -- (6.8,5.8) -- cycle;\
            \node[green!30!black] at (7.5,3) {$\R^+_\p$};
            \fill[cyan] (2,0)-- (2,-1) -- (3,-1) -- (3,-2) -- (5,-2) -- (5,-3) -- (6,-3) -- (6,-4.5) -- (6.8,-4.5) -- (6.8,0) -- cycle;
            \node[blue] at (7.5,-2) {$\R^-_\p$};
           
            \foreach \p in {(0,5),(1,1),(2,0)} {
                \draw[green!50!black,thick] \p circle (.2);
            }
            
            \foreach \p in {(2,-1),(3,-2),(5,-3)} {
                \draw[blue,thick] \p circle (.2);
            }
            
            \foreach \p in {(1,0),(2,-2),(4,-3),(5,-4)} {
                \fill[gray] \p circle (.2);
            }
    
            \draw[->] (0,-4.5) -- (0,5.9);
            \node[rotate=90] at (-1.3,4.5) {distance $t$};
            \draw[->] (0,0) -- (6.9,0) node[right] {value $v$};
            \foreach \i in {-4,-3,...,5} {
              \draw (-0.15,\i) node[left] {\i} -- (0.15,\i);
            }
            \foreach \i in {1,2,...,6} {
              \draw (\i,-0.15) node[below] {\i} -- (\i,0.15);
            }
        \end{tikzpicture}
        
    \caption{This plot shows the bigrades where pixel $\p$, the lower-left pixel in \cref{imageExample}, is present in the value-offset bifiltration with respect to the taxicab distance. The green shading depicts $\R^+_\p$ as a continuous region, with $\B^+_\p$ indicated by the green circled points. The blue shading depicts $\R^-_\p$ as a continuous region, with $\B^-_\p$ shown as blue circled points, though negative entrance point $(6, -\infty)$ is not shown. Note that bigrade $(2,0)$ is not in $\B_\p$ because it is not a minimal point of $\B^+_\p \cup \B^-_\p$. The set $\B^c_\p$ of complement entrance points is shown as gray dots.}
    \label{StairstepExample1}
\end{figure}

In the following sections, we present our algorithms for finding entrance points.
While we use a direct approach to find positive points, we employ a sort of duality to find negative entrance points.
Recall that our thinning procedure can be viewed as thickening the complement: precisely, thinning the sublevel set $f^-_v$ is equivalent to thickening the superlevel set $f^+_v = \{\p \in \P \mid f(\p) > v\}$.
This suggests that the same algorithm may be employed to find minimal bigrades at which a pixel $\p$ is present and also maximal bigrades at which $\p$ is \emph{not} present. 
The set of maximal bigrades $(v,t)$ with $t \le 0$ at which $\p$ is not present then determines $\B^-_\p$ in a simple way.
This is our approach, which we make precise by the following definition and theorem.

\begin{definition}\label{compDef}
    Let $\R_\p^c = \{ (v,t) \mid v \in \V, -t \in \D, \p \not\in f^-_{v,t} \}$ denote the set of bigrades with offset $t \le 0$ at which pixel $\p$ is \emph{not present}.
    The set of \emph{complement entrance points} $\B_\p^c$ is the set of maximal points in $\R^c_\p$.
\end{definition}

Intuitively, bigrade $(v,t)$ with $t < 0$ is a complement entrance point for $\p$ if all pixels with distance less than $-t$ from $\p$ have value less than or equal to $v$, and there is a pixel at distance $-t$ from $\p$ with value $v+1$.
Equivalently, $(v,t)$ with $v \in \V$, $t < 0$, $-t \in \D$ is a complement entrance point for pixel $\p$ if $\p$ is not present at $(v,t)$ but $\p$ is present at both $(v',t)$ for $v' > v$ and $v' \in \V$, and also at $(v,t')$ for $t' > t$ and $-t' \in \D$.

Note that if $f(\p) > \min(\V)$, then $(f(\p)-1, 0)$ is a complement entrance point, since it is a maximal point among the discrete bigrades $(v,t)$ with $t \le 0$ at which $\p$ is not present.
However, if $f(\p) = \min(\V)$, then there is no complement entrance point $(v,t)$ with $t=0$.

The following theorem establishes a relationship between negative entrance points and complement entrance points that is important for our algorithms.
The theorem requires a bit of notation: given any pixel $\p$ and distance function $d$, let $\D_\p =  \{ d(\p, \q) \mid \q \in \P \}$ be the set of distances from $\p$ to all pixels.
Since the image contains finitely many pixels, $\D_\p$ is a finite set.
For any $r>0$, let $\prev(r, \D_\p)$ be the largest value in $\D_\p$ that is smaller than $r$.
Furthermore, we say that complement entrance points $\mathbf{a}$ and $\mathbf{b}$ for pixel $\p$ are \emph{consecutive} if $\mathbf{a} <_v \mathbf{b}$ and there is no $\mathbf{c} \in \B^c_\p$ such that $\mathbf{a} <_v \mathbf{c} <_v \mathbf{b}$. We similarly refer to consecutive negative entrance points.

\begin{theorem} \label{thinningAndComplement}
For any two consecutive complement entrance points $\textbf{a} = (v_\textbf{a}, t_\textbf{a}) <_v (v_\textbf{b}, t_\textbf{b}) = \textbf{b}$ for a pixel $\p$, there exists a unique negative entrance point $\textbf{c}$ for $\p$ which satisfies $\textbf{a} <_v  \textbf{c} \le_v \textbf{b}$. 
\end{theorem}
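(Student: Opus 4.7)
The plan is to identify $\mathbf{c}$ explicitly from $\mathbf{a}$ and $\mathbf{b}$, then verify existence, minimality, and uniqueness in turn. Since $\mathbf{a}$ and $\mathbf{b}$ are incomparable maximal elements of $\R^c_\p$ with $v_\mathbf{a} < v_\mathbf{b}$, it follows that $t_\mathbf{a} > t_\mathbf{b}$, and hence $t_\mathbf{b} < 0$. I will let $t_\mathbf{c}$ denote the smallest element of $-\D$ strictly greater than $t_\mathbf{b}$; since both $0$ and $t_\mathbf{a}$ lie in $-\D$, we have $t_\mathbf{b} < t_\mathbf{c} \le t_\mathbf{a} \le 0$. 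I will let $v_\mathbf{c}$ denote the smallest element of $\V$ strictly greater than $v_\mathbf{a}$; since $v_\mathbf{b} \in \V$, we have $v_\mathbf{a} < v_\mathbf{c} \le v_\mathbf{b}$. The candidate for the negative entrance point is $\mathbf{c} := (v_\mathbf{c}, t_\mathbf{c})$.

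The crux is a preliminary lemma: for every $v \in \V$ with $v_\mathbf{a} < v \le v_\mathbf{b}$, the pixel $\p$ is present at $(v, t_\mathbf{c})$. I would argue by contradiction: if $(v, t_\mathbf{c}) \in \R^c_\p$, then by finiteness it is dominated by some complement entrance point $\mathbf{e}$, necessarily with $v_\mathbf{e} \ge v > v_\mathbf{a}$ and $t_\mathbf{e} \ge t_\mathbf{c} > t_\mathbf{b}$. Incomparability of $\mathbf{e}$ and $\mathbf{b}$ as maximal elements rules out both $v_\mathbf{e} > v_\mathbf{b}$ (which would force $t_\mathbf{e} < t_\mathbf{b}$ by stairstep geometry) and $v_\mathbf{e} = v_\mathbf{b}$ (which would force $\mathbf{e} = \mathbf{b}$ by column-uniqueness of maxima, contradicting $t_\mathbf{e} > t_\mathbf{b}$). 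So $v_\mathbf{a} < v_\mathbf{e} < v_\mathbf{b}$, contradicting the consecutiveness of $\mathbf{a}$ and $\mathbf{b}$.

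Specializing the lemma at $v = v_\mathbf{c}$ places $\mathbf{c}$ in $\R^-_\p$. To verify minimality, I would assume some $\mathbf{c}' = (v', t') \prec \mathbf{c}$ also lies in $\R^-_\p$. If $v' < v_\mathbf{c}$, then $v' \le v_\mathbf{a}$ (by the choice of $v_\mathbf{c}$) and $t' \le t_\mathbf{c} \le t_\mathbf{a}$, so $\mathbf{c}' \preceq \mathbf{a}$; upward-closure of $\R^-_\p$ then puts $\mathbf{a}$ in $\R^-_\p$, contradicting $\mathbf{a} \in \R^c_\p$. If $v' = v_\mathbf{c}$ and $t' < t_\mathbf{c}$, then $t' \le t_\mathbf{b}$ (as $t_\mathbf{c}$ is the $-\D$-successor of $t_\mathbf{b}$) and $v' \le v_\mathbf{b}$, so $\mathbf{c}' \preceq \mathbf{b}$, yielding the analogous contradiction. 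Uniqueness will follow by a similar case split: any other negative entrance point $\mathbf{c}'' = (v'', t'')$ in the strip has $v'' \ge v_\mathbf{c}$, and equality forces $\mathbf{c}'' = \mathbf{c}$ (two distinct minimal elements of $\R^-_\p$ cannot share a coordinate), while $v'' > v_\mathbf{c}$ together with incomparability to $\mathbf{c}$ forces $t'' < t_\mathbf{c}$, hence $t'' \le t_\mathbf{b}$ and $\mathbf{c}'' \preceq \mathbf{b}$, a contradiction.

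The main obstacle is the lemma: this is where the consecutiveness hypothesis is used decisively to rule out any intruding complement entrance point in the $v$-strip $(v_\mathbf{a}, v_\mathbf{b})$ that would otherwise disrupt the predicted location of $\mathbf{c}$. Once the lemma is in place, minimality and uniqueness reduce to routine comparisons against $\mathbf{a}$ and $\mathbf{b}$ using the stairstep geometry of the entrance sets.
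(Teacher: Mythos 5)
Your proposal is correct and follows essentially the same route as the paper: both construct the explicit candidate $\mathbf{c}$ at the value immediately above $v_\mathbf{a}$ and the offset immediately above $t_\mathbf{b}$, then verify presence, minimality, and uniqueness by comparing against $\mathbf{a}$ and $\mathbf{b}$ and invoking consecutiveness to exclude an intervening complement entrance point. Your version is somewhat more explicit about the discrete grids and the monotonicity/antichain facts being used (and your uniqueness argument cleanly covers the boundary case $v''=v_\mathbf{b}$), but the underlying argument is the same.
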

\begin{proof}
    Let $t = -\prev(-t_\textbf{b}, \D_\p)$ and $\textbf{c} = (v_\textbf{a} + 1, t)$, as shown in \cref{consecutiveComplement}.
    We claim that $\textbf{c}$ is the unique negative entrance point guaranteed by the theorem.
    
    First, we show that $\p$ is present at bigrade $\mathbf{c}$.
    If $v_\textbf{b} = v_\textbf{a} + 1$, then $\mathbf{c} = (v_\textbf{b}, t)$ and $\p$ is present at $\mathbf{c}$ since $\mathbf{b}$ is a complement entrance point.
    Otherwise, $v_\textbf{b} > v_\textbf{a} + 1$.
    Since $\mathbf{a}$ is a complement entrance point, it follows that $\p$ is present at bigrade $(v_\textbf{a} + 1, t_\textbf{a})$.
    If $\p$ were not present at $\mathbf{c}$, then there would have to be a complement entrance point $(v_\textbf{a} + 1, t^*)$ with $t < t^* < t_\textbf{a}$, which would contradict the assumption that $\mathbf{a}$ and $\mathbf{b}$ are consecutive.
    Thus, $\p$ is present at $\mathbf{c}$. 
    
    Complement entrance points $\mathbf{a}$ and $\mathbf{b}$ guarantee that $\p$ is not present at bigrades $(v_\textbf{a},t)$ or $(v_\textbf{a} + 1, t_\textbf{b})$, respectively. Hence, $\mathbf{c}$ is a minimal point in $\R^-_\p$, meaning that $\mathbf{c}$ is a negative entrance point for pixel $\p$.
    
    Lastly, $\p$ cannot have any other negative entrance points $\mathbf{e} = (v_\textbf{e}, t_\textbf{e})$ with $v_\textbf{a} < v_\textbf{e} < v_\textbf{b}$. 
    If $t_\textbf{c} \le t_\textbf{e}$, then $\mathbf{c} \preceq \mathbf{e}$, so $\mathbf{e}$ is not minimal.
    However, if $t_\textbf{e} < t_\textbf{c}$, then $\p$ is not present at $\mathbf{e}$.
    
    Therefore, $\mathbf{c}$ is the unique negative entrance point satisfying $\textbf{a} <_v  \textbf{c} <_v \textbf{b}$.
\end{proof}

\begin{figure}[h]
    \centering
    \begin{tikzpicture}[scale=0.6]
        \draw (3,0.15) -- (3,-0.15);
        \draw (-0.15,-1) -- (0.15,-1);
        \draw[dashed] (3,0) -- (3,-6);
        \draw[dashed] (0,-1) -- (10,-1);
        \node at (3.4,-1.4) {$\mathbf{a}$};
        \node at (-0.5,-1) {$t_\mathbf{a}$};
        \node at (3,0.5) {$v_\mathbf{a}$};
        
        \draw (8,0.15) -- (8,-0.15);
        \draw (-0.15,-5) -- (0.15,-5);
        \draw[dashed] (8,0) -- (8,-6);
        \draw[dashed] (0,-5) -- (10,-5);
        \node at (8.4,-5.4) {$\mathbf{b}$};
        \node at (-0.5,-5) {$t_\mathbf{b}$};
        \node at (8,0.5) {$v_\mathbf{b}$};
        
        \draw (4,0.15) -- (4,-0.15);
        \draw (-0.15,-4) -- (0.15,-4);
        \draw[dashed] (4,0) -- (4,-6);
        \draw[dashed] (0,-4) -- (10,-4);
        \node at (4.4,-4.4) {$\mathbf{c}$};
        \node[left] at (-0.1,-4) {$t = -\prev(-t_\mathbf{b}, \D_\p)$};
        \node at (4,1.5) {$v_\mathbf{a}+1$};
        \draw[->] (4,1.1) -- (4,0.3);
        
        \foreach \p in {(3,-1),(8,-5)} {
            \fill[gray] \p circle (.2);
        }
            
        \fill[blue] (4,-4) circle (.2);
        
        \draw[<->] (0,-7) node[left] {offset $t$} -- (0,1);
        \draw[->] (-0.2,0) node[left] {$0$} -- (10,0) node[right] {value $v$};
        \node at (6,0.5) {$\cdots$};
        \node at (-0.5,-2.5) {$\vdots$};
    \end{tikzpicture}
    \caption{Between any two consecutive complement entrance points $\mathbf{a}$ and $\mathbf{b}$ there exists a unique negative entrance point $\mathbf{c}$, as shown in \cref{thinningAndComplement}.}
    \label{consecutiveComplement}
\end{figure}
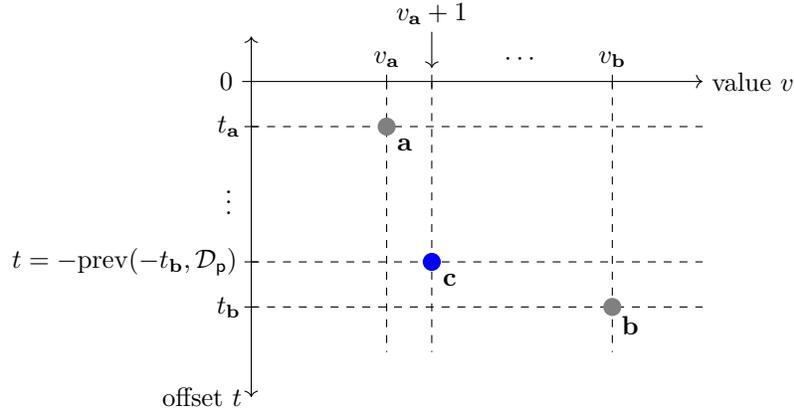

Similarly, between any two consecutive negative entrance points $\textbf{a} <_v \textbf{b}$, there exists a complement entrance point $\textbf{c}$ that satisfies $\textbf{a} <_v  \textbf{c} <_v \textbf{b}$ and $\textbf{b} <_t  \textbf{c} <_t \textbf{a}$; the proof is analogous to that of the previous theorem.
Therefore, if we find all of the complement entrance points for a pixel $\p$, we can apply the reasoning in the proof of \cref{thinningAndComplement} to find all negative entrance points, except for the last and possibly the first negative entrance point for $\p$.
In particular, the negative entrance point $(\max(\V),-\infty)$ does not lie between two complement entrance points.

\section{Taxicab Distance}\label{taxicabSection}

In this section we discuss our algorithm for finding the entrance set associated with each pixel in with respect to the taxicab distance.
We present our algorithm, prove its correctness, and discuss its time and space complexity.

\subsection{Algorithm}

We handle the thickening and thinning processes separately. 
We first find the positive entrance points (which result from thickening) and then find the negative entrance points (which result from thinning).
For both processes, and for each function value $v$, we perform a breadth-first search starting with all pixels of value $v$.
For each pixel, we maintain lists of all positive and negative entrance points.
We join these lists to obtain the entrance set for each pixel.

The core of our algorithm is the breadth-first search, allowing us to visit all pixels in order of increasing taxicab distance from the closest pixel with a given value $v$.
For this, we view the image as a graph, with each pixel a node and edges connecting pixels that are adjacent horizontally or vertically. 
For each value $v \in \mathcal{V}$, we employ a breadth-first traversal of this graph, starting at all pixels with value $v$.
The depth $T_{\p,v}$ at which pixel $\p$ is encountered in the breadth-first traversal at value $v$ gives the taxicab distance from pixel $\p$ to the nearest pixel with value $v$.

Our thickening algorithm, \cref{Thick4C}, iterates through the function values $v\in \mathcal{V}$ in ascending order.
For each pixel $\p$ with a function value $v$, we set $T_{\p,v}=0$ and push $\p$ into a queue. 
The queue is key to our breadth-first search.
Specifically, when a pixel $\p$ is removed from the queue, we consider all pixels $\q$ that are adjacent to $\p$ either horizontally or vertically.
For each such $\q$, we set $T_{\q,v} = T_{\p,v} + 1$ and check whether $(v,T_{\q,v})$ is a positive entrance point---that is, whether it is a new minimal point in $\mathsf{B}_\q^+$, our working positive entrance set for $\q$.
If so, we append $(v,T_{\q,v})$ to $\mathsf{B}_\q^+$ and push $\q$ into the queue.
We continue this process until the queue is empty, and then repeat for the next value of $v$.
Since each pixel has at most one positive entrance point per value $v$, each pixel is added to the queue at most once for each value $v$, and thus the algorithm terminates. 

\begin{algorithm}[h]
\ForEach{$v\in\mathcal{V}$ in ascending order}{
    \tcp{prepare breadth-first search at value $v$}
    
    initialize empty queue $Q$
    
    \ForEach{pixel $\p$ such that $f(\p) = v$}{ 
        $T_{\p,v} \gets 0$
        
        append bigrade $(v,0)$ to $\mathsf{B}^+_\p$
        
        $Q.\mathrm{push}(\p)$
    }
    
    \tcp{perform breadth-first search}
    \While{$Q$ is not empty}{
    
        $\p \gets Q.\mathrm{pop}()$
        
        \ForEach{pixel $\q$ adjacent to $\p$}{
            $T_{\q,v} \gets T_{\p,v}+1$
            
            \If{$\mathsf{B}^+_\q=\emptyset$ or $T_{\q,v} < \min\{y \mid (x,y)\in \mathsf{B}^+_\q\}$ }{
                append bigrade $(v,T_{\q,v})$ to $\mathsf{B}^+_\q$
                
                $Q.\mathrm{push}(\q)$
            }
        }
    }
}
\caption{Thickening with respect to Taxicab Distance}
\label{Thick4C}
\end{algorithm}

We offer a few comments about \cref{Thick4C}.
First, in the iteration for function value $v$, pixels with value less than $v$ are not added to the queue.
For, if pixel $\q$ has value $v' < v$, then the algorithm has already found an entrance point $(v', 0)$ for $\q$, so no bigrade with value $v$ can be an entrance point for $\q$.
Thus, $\q$ is never pushed into the queue during the breadth-first traversal for function value $v$.

Second, it is only necessary to push pixel $\q$ into the queue if its working entrance set is modified. Pushing $\q$ into the queue simply guarantees that the algorithm will later examine neighbors of $\q$ for entrance points at value $v$.
If $\mathsf{B}^+_\q$ is not modified at value $v$, that means it already contains entrance point $\mathbf{a} \prec (v, T_{\q,v})$; in this case, if any neighbor $\q'$ of $\q$ has an entrance point at value $v$, there must be a path from $\q'$ to a pixel of value $v$ that is shorter than any path through $\q$.

As a modification to the algorithm, whenever a entrance point is appended to $\mathsf{B}^+_\q$, we could mark pixel $\q$ as ``visited'' and specify that the innermost foreach loop considers only \emph{unvisited} pixels $\q$. This modification would not affect the output of the algorithm, but it would require additional memory for minimal speedup.

Our thinning algorithm, \cref{Thin4C}, operates similarly, except that it finds the complement entrance set for each pixel.
This involves iterating through the function values in decreasing order and finding maximal bigrades where each pixel is \emph{not} present in the bifiltration.

We start at $v = \max(\V)$ and iterate through the function values in descending order. 
As in \cref{Thick4C}, we employ a queue to perform a breadth-first traversal.
For each pixel $\p$ of value $v$, we find all pixels $\q$ such that $\p$ (at distance $t$) is a closest pixel of value $v$ or more. Such a pixel $\q$ is present at bigrades $(v,-t)$ and $(v-1, -\prev(-t))$, but not at $(v-1, -t)$;
thus, bigrade $(v-1,-t)$ is a complement entrance point for $\q$.
The algorithm compiles a working complement entrance set $\mathsf{B}_\p^c$, for each pixel.
At the end, we have found all the complement entrance points of the pixels; that is, $\mathsf{B}^c_\p=\mathcal{B}^c_\p$ for all pixels $\p$. 

We do not have to consider $v = \min(\V)$ since no pixels are present at bigrades with value $\min(\V)-1$.
However, when $v > \min(\V)$, we make sure that $(v-1,0) \in \mathsf{B}^c_\p$, since this is a complement entrance point by \cref{compDef}.
Note that in \Cref{Thin4C}, the values $T_{\q,v}$ are positive distance values, which are negated when used as offset values in bigrades.

\begin{algorithm}[h]
\ForEach { $v\in\mathcal{V}$ such that $v > \min(\V)$, in descending order,}{
    \tcp{prepare breadth-first search at value $v$}
    
    initialize empty queue $Q$
    
    \ForEach{pixel $\p$ such that $f(\p) = v$}{
        $T_{\p,v} \gets 0$
        
        append bigrade $(v-1, 0)$ to $\mathsf{B}^c_\p$
        
        $Q.\mathrm{push}(\p)$
    }
    \tcp{perform breadth-first search}
    \While{$Q$ is not empty}{

        $\p \gets Q.\mathrm{pop}()$
        
        \ForEach{pixel $\q$ adjacent to $\p$}{
            $T_{\q,v} \gets T_{\p,v} + 1$
            
            \If{$\mathsf{B}^c_\q=\emptyset$ or $T_{\q,v} < \min\{ -y \mid (x,y)\in \mathsf{B}^c_\q \}$}{
                
                append bigrade $(v-1, -T_{\q,v})$ to $\mathsf{B}^c_\q$
                
                $Q.\mathrm{push}(\q)$
            }
        }
    }
} 
\caption{Thinning with respect to Taxicab Distance}
\label{Thin4C}
\end{algorithm}

We find the negative entrance points from the complement entrance points for each pixel using \cref{thinningAndComplement}.
\Cref{ConversionAlg}, which we call our \emph{conversion} algorithm, not only performs this calculation but also combines the sets of positive and negative entrance points to produce the entrance set for each pixel.
Importantly, \cref{ConversionAlg} is independent of the notion of distance.
Our conversion algorithm proceeds one pixel at a time, iterating through each pair of consecutive complement entrance points in ascending order of function value.
The basic idea is simple: for each pair of consecutive complement entrance points, \cref{thinningAndComplement} gives a negative entrance point $\mathbf{b}$ for $\p$. 

However, the first negative entrance point must be handled carefully.
Specifically, a pixel $\p$ with value $f(\p) = \min(\V)$ has a negative entrance point with this value. This entrance point does not lie between two complement entrance points because no complement entrance points have values less than $\min(\V)$. If this entrance point is $(\min(\V),0)$, then it exists in $\mathsf{B}^+_\p$ from \cref{Thick4C}; otherwise, it is $(\min(\V),-t)$ for the largest $t \in \D_\p$ that is less than absolute offset value of all other negative entrance points for $\p$.

Furthermore, if pixel $\p$ has distinct positive and negative entrance points with value $f(\p)$, as in \cref{StairstepExample1}, then the positive entrance point $(f(\p),0)$ is not minimal in $\B_\p$.
Specifically, if pixel $\p$ has a negative entrance point at $(f(\p),t)$ for $t < 0$, then \cref{ConversionAlg} removes bigrade $(f(\p),0)$ from $\mathsf{B}^+_\p$.
This occurs in the first iteration of the inner foreach loop in \cref{ConversionAlg}, before the negative entrance bigrades are appended to $\mathsf{B}^+_\p$.

Lastly, we include an entrance point $(\max(\V), -\infty)$ for each pixel (recall \cref{entranceDef}).

At the end of this process, $\mathsf{B}_\p^+$ is the entrance set (that is, $\mathsf{B}_\p^+=\mathcal{B}_\p$) for each pixel $\p$, a claim we justify in the next subsection.

\begin{algorithm}[h]
  \ForEach{pixel $\p$}{

    \If{$f(\p) = \min(\V)$}{
        $t \gets -\prev( \min\{ -y \mid (x,y) \in \mathsf{B}^c_\p \}, \D_\p)$
        
        \If{$t < 0$}{
            remove bigrade $(\min(\V),0)$ from $\mathsf{B}^+_\p$
            
            append bigrade $(\min(\V), t)$ to $\mathsf{B}^+_\p$
        }
    }

    \ForEach{consecutive $\mathbf{a},\mathbf{b} \in \mathsf{B}^c_\p$ in ascending order of function value}{
        $v \gets v_\mathbf{a} + 1$
        
        $t \gets -\prev(-t_\mathbf{b}, \mathcal{D}_\p)$
        
        \If{$v = f(\p)$ and $t = 0$}{
            continue
        }
        
        \If{$v = f(\p)$}{
            remove bigrade $(f(\p),0)$ from $\mathsf{B}^+_\p$
        }
        
        \If{$t < 0$}{
            append bigrade $(v, t)$ to $\mathsf{B}^+_\p$
        }
    }        
    append bigrade $(\max(\V), -\infty)$ to $\mathsf{B}^+_\p$
  }
  \caption{Conversion}
  \label{ConversionAlg}
\end{algorithm}

\subsection{Proof of Correctness}\label{4cCorrectness}

First, we prove that our thickening algorithm computes all positive entrance points for all pixels. 
Next, we prove that our thinning algorithm computes all complement entrance points for all pixels.
Together with \cref{thinningAndComplement}, this implies that we correctly compute all negative entrance points for all pixels. 

\begin{theorem} \label{4conThickening}
    \Cref{Thick4C} correctly and exhaustively finds all positive entrance points for all pixels. 
    That is, the algorithm results in $\mathsf{B}_\p^+=\mathcal{B}_\p^+$ for all pixels $\p$.
\end{theorem}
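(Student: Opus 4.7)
My plan is to prove by induction on $v \in \V$ (processed in ascending order) that after \Cref{Thick4C} finishes its iteration for value $v$, the working set $\mathsf{B}^+_\p$ equals the set of positive entrance points of $\p$ at values at most $v$. The base case (before any iteration) is immediate. For the inductive step, let $D_v(\q)$ denote the taxicab distance from $\q$ to the nearest pixel of value exactly $v$, and let $D_{<v}(\q)$ denote the taxicab distance from $\q$ to the nearest pixel of value strictly less than $v$. By the outer inductive hypothesis, $D_{<v}(\q) = \min\{y : (x,y) \in \mathsf{B}^+_\q\}$ at the start of the BFS for value $v$. The step reduces to the following claim: during this BFS, the algorithm appends $(v, T_{\q,v})$ to $\mathsf{B}^+_\q$ if and only if $(v, D_v(\q))$ is a positive entrance point for $\q$, and in that case $T_{\q,v} = D_v(\q)$ at the moment of appending.

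I would prove the inner claim by induction on $d = D_v(\q)$. The base case $d=0$ is handled directly by the BFS initialization loop. For the inductive step in the ``if'' direction, suppose $(v,d)$ is a positive entrance point for $\q$ and pick a neighbor $\p$ of $\q$ lying on a shortest path from $\q$ to a value-$v$ pixel, so $D_v(\p) = d-1$. The triangle inequality forces $(v, d-1)$ to be a positive entrance point for $\p$: a pixel of value below $v$ within taxicab distance $d-1$ of $\p$ would sit within distance $d$ of $\q$, contradicting the minimality of $(v,d)$ in $\R^+_\q$. By the inner inductive hypothesis, $\p$ is pushed and eventually popped; when $\p$ visits $\q$, it sets $T_{\q,v}=d$, and the algorithm's condition $T_{\q,v} < \min\{y:(x,y)\in\mathsf{B}^+_\q\} = D_{<v}(\q)$ coincides with the positive-entrance criterion at $\q$, which holds by assumption. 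Hence $\q$ is pushed and $(v,d)$ is appended. For the ``only if'' direction, any predecessor $\p$ popped with correct value $T_{\p,v}=D_v(\p)$ gives $T_{\q,v} = D_v(\p) + 1 \ge D_v(\q)$; pushing $\q$ requires $T_{\q,v} < D_{<v}(\q)$, which combined with $D_v(\q) \le T_{\q,v}$ forces $D_v(\q) < D_{<v}(\q)$, i.e., $(v, D_v(\q))$ is a positive entrance point, and a short check excludes $T_{\q,v} > D_v(\q)$.

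A subtle point is that $T_{\q,v}$ can be overwritten by subsequent visiting predecessors, so I must ensure its value at the moment the condition is evaluated equals $D_v(\q)$. Here the bipartite structure of the 4-connected grid (colorable by $(\p_x+\p_y) \bmod 2$) is essential: every neighbor of $\q$ lies at BFS level $d-1$ or $d+1$, and level-$(d+1)$ neighbors are pushed only after level-$d$ pixels are popped, so they cannot overwrite $T_{\q,v}$ before $\q$ is first visited and its condition checked. Hence the first visitor of $\q$ is a shortest predecessor, delivering $T_{\q,v} = d$ exactly. The main obstacle I anticipate is justifying the ``skip if condition fails'' optimization: one must show that skipped pixels cannot hide positive entrance points farther along the BFS frontier. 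This is resolved by observing that any skipped $\p$ already holds an entrance point $(v',t') \in \mathsf{B}^+_\p$ with $v' < v$ and $t' \le T_{\p,v}$, so by the triangle inequality every neighbor $\q$ of $\p$ also lies within distance $t'+1$ of that low-value pixel, ruling out $(v, T_{\p,v}+1)$ as a positive entrance point for $\q$. Thus skipping is harmless, and the double induction goes through to establish $\mathsf{B}^+_\p = \mathcal{B}^+_\p$.
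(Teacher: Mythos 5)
Your overall strategy---an outer induction on $v$ and an inner induction on the distance $D_v(\q)$, together with a separate argument that the skip optimization cannot hide entrance points farther along the frontier---is sound and is a considerably more careful elaboration of what the paper actually does (the paper simply asserts that the breadth-first search computes the correct taxicab distances and then argues minimality and exhaustiveness informally; your treatment of skipped pixels matches the paper's own remark following \Cref{Thick4C}). However, one step fails as written: the bipartiteness claim. You assert that every neighbor of $\q$ lies at BFS level $d-1$ or $d+1$ because the $4$-connected grid is $2$-colorable by $(\p_x+\p_y)\bmod 2$. That would be true for a single-source search, but here the search is seeded simultaneously from \emph{all} pixels of value $v$, and these sources need not share a parity. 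For instance, if $(0,0)$ and $(1,0)$ both have value $v$, then the adjacent pixels $(0,1)$ and $(1,1)$ both lie at level $1$. So adjacent pixels can occupy the same BFS level, and your stated reason for why the first visit to $\q$ delivers $T_{\q,v}=D_v(\q)$ does not hold.

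The conclusion you need is nevertheless true, and it genuinely matters: if a same-level neighbor could reach $\q$ first, it would assign $T_{\q,v}=d+1$, the if-condition could pass, and the algorithm would append the non-minimal bigrade $(v,d+1)$ before later appending $(v,d)$, breaking $\mathsf{B}^+_\q=\mathcal{B}^+_\q$. The correct justification is the standard FIFO invariant of breadth-first search: the $T$-values carried by pushed pixels are nondecreasing over the run of the while loop (provable by induction on pops, using your inner hypothesis that every pushed pixel carries its true distance), so every level-$(d-1)$ pixel in the queue is popped before any level-$d$ pixel. Since you have already shown that a level-$(d-1)$ neighbor of $\q$ on a shortest path has an entrance point at value $v$ and is therefore pushed, that neighbor reaches $\q$ before any same-level neighbor can, and the first evaluation of the if-condition at $\q$ uses $T_{\q,v}=d$ exactly; every later visit then fails the condition. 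With the parity argument replaced by this queue-ordering argument, your double induction goes through.
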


\begin{proof}
    For any $v$, the breadth-first search expands outward from pixels with value $v$: whenever a pixel is popped from the queue, the search moves to its horizontal and vertical neighbors that have not yet been considered at function value $v$.
    Each value $T_{\q,v}$ is the taxicab distance from pixel $\q$ to a closest pixel with value $v$.
    (Pixel $\q$ may have multiple closest pixels with value $v$.)
    Thus, if bigrade $(v, T_{\p,v})$ is added to $\mathsf{B}^+_\p$, then pixel $\p$ is present at that bigrade $(v, T_{\p,v})$ by \Cref{presentPos}.
    
    Furthermore, the bigrades $(v, T_{\p,v})$ added to $\mathsf{B}^+_\p$ are minimal in $\R^+_\p$. 
    If the algorithm adds bigrades $(v, T_{\p,v})$ to $\mathsf{B}^+_\p$, then pixel $\p$ is not present at any distance less than $T_{\p,v}$ for value $v$. 
    Also, pixel $\p$ is not present at distance $T_{\p,v}$ for any $v' < v$ (due to the if statement in the algorithm).
    Thus, $(v, T_{\p,v})$ is a minimal point in $\R^+_\p$. 
    
    Lastly, the algorithm adds \emph{every} minimal point in $\R^+_\p$ to $\mathsf{B}^+_\p$. 
    If $(v,t)$ is a minimal point in $\R^+_\p$, then there exists a pixel $\q$ with value $v$ whose taxicab distance from $\p$ is $t$, and no pixel closer to $\p$ has value less than or equal to $v$.
    When \Cref{Thick4C} runs at value $v$, its breadth-first search encounters pixel $\p$ with $T_{\p,v} = t$, and $(v, T_{\p,v})$ is added to $\mathsf{B}^+_\p$. 
    
    Therefore, \Cref{Thick4C} computes $\mathsf{B}_\p^+$ containing exactly the set of minimal points in $\R^+_\p$, so at the conclusion of the algorithm, $\mathsf{B}_\p^+=\mathcal{B}_\p^+$ for all $\p$.
\end{proof}

We now turn to \cref{Thin4C}, our thinning algorithm. 
For this, recall the duality between thickening and thinning.
\Cref{Thick4C} finds minimal points in $\R^+_\p$ by iterating over $\V$ in increasing order while storing entrance points with positive offset values $t$ that decrease to zero.
Similarly, \Cref{Thin4C} finds maximal points in $\R^c_\p$ by iterating over $\V$ in decreasing order while storing negative offset values $t$ that increase to zero.

\begin{theorem} \label{4conThinning}
    \Cref{Thin4C} correctly and exhaustively finds all complement entrance points for all pixels. 
    That is, the algorithm results in $\mathsf{B}_\p^c=\mathcal{B}_\p^c$ for all $\p$.
\end{theorem}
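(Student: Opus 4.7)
The plan is to mirror the proof of \cref{4conThickening}, exploiting the duality the authors highlight: thinning the sublevel set $f^-_v$ is equivalent to thickening its complement, so running a BFS from each value-$v$ pixel while iterating $v$ in descending order is the same algorithmic pattern as \cref{Thick4C}, only now extracting \emph{maximal} points in $\R^c_\p$ instead of minimal points in $\R^+_\p$. I would establish $\mathsf{B}^c_\p = \mathcal{B}^c_\p$ by two inclusions.

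For soundness, whenever \cref{Thin4C} appends $\mathbf{m} = (v-1, -T_{\q,v})$ to $\mathsf{B}^c_\q$, the BFS path of length $T_{\q,v}$ witnesses a value-$v$ pixel at taxicab distance $T_{\q,v}$ from $\q$, placing $\mathbf{m}$ in $\R^c_\q$. The if-guard forces $T_{\q,v}$ to be strictly smaller than every magnitude already in $\mathsf{B}^c_\q$ (each of which was inserted at an iteration with strictly larger function value), so $\mathbf{m}$ is incomparable with every previous entry. Maximality of $\mathbf{m}$ in $\R^c_\q$ then amounts to ruling out a dominating $(v'', -t'') \succ \mathbf{m}$, which splits into the cases $v'' > v-1$ (handled by induction on iteration order, using the absence of a smaller-magnitude entry in $\mathsf{B}^c_\q$) and $v'' = v-1$ (handled because the current BFS has not found a strictly shorter path from $\q$ to any value-$v$ pixel). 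For completeness, given a complement entrance point $(v-1,-t)$ for $\p$, unpacking \cref{compDef} produces a value-$v$ pixel $\q^*$ at distance $t$ from $\p$ with no pixel of value $\ge v$ strictly closer and no pixel of value $> v$ within distance $t$; during iteration $v$ the BFS rooted at $\q^*$ reaches $\p$ with $T_{\p,v} = t$, and at that moment $\mathsf{B}^c_\p$ contains no entry of magnitude $\le t$ (such an entry would contradict the hypotheses on $(v-1,-t)$), so the if-guard fires and the bigrade is appended.

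The main obstacle is that \cref{Thin4C} truncates its BFS: a pixel $\q'$ is skipped from the queue whenever $\mathsf{B}^c_{\q'}$ already holds a smaller-magnitude entry, and such skipping can block shortest paths to downstream pixels, leaving $T_{\q,v}$ possibly larger than the true taxicab distance from $\q$ to the nearest value-$v$ pixel. The key invariant I would establish by induction on $v$ is: whenever truncation at some $\q'$ blocks a shortest path from a value-$v$ pixel to some $\q$, the entry already in $\mathsf{B}^c_{\q'}$ witnesses (via the triangle inequality applied to the blocked path) a pixel of value $> v$ within distance strictly less than that shortest-path length from $\q$. This single invariant delivers both the dominating-pair analysis in soundness and the reachability claim in completeness: if no such close pixel of value $> v$ exists for $\p$, then no truncation can block a shortest path from $\q^*$ to $\p$, and the BFS correctly produces $T_{\p,v} = t$. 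With the invariant in place, the two inclusions follow by routine bookkeeping parallel to \cref{4conThickening}.
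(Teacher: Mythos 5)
Your proposal follows the same basic structure as the paper's proof---soundness (every appended bigrade is a complement entrance point) plus completeness (every complement entrance point is appended), argued by analogy with \cref{4conThickening}---but you go further than the paper does. The paper's proof simply asserts that $T_{\q,v}$ is the taxicab distance from $\q$ to the closest pixel of value $v$ and never confronts the fact that the if-guard prunes the search: a pixel already holding a smaller-magnitude entry is not re-pushed, so shortest paths through it are cut off. Your truncation invariant is exactly the missing justification: if a pixel $\q'$ at distance $i$ from the source on a shortest path of length $k$ to $\q$ is pruned, then (by induction on the iteration order, since the minimum magnitude in $\mathsf{B}^c_{\q'}$ records the distance to the nearest pixel of value greater than $v$) some pixel of value greater than $v$ lies within distance $i$ of $\q'$, hence within distance $i + (k-i) = k$ of $\q$, which already disqualifies $(v-1,-k)$ from being a complement entrance point and causes the guard at $\q$ to fail regardless. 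One small correction: the triangle inequality yields a value-$>v$ pixel within distance \emph{at most} the shortest-path length from $\q$, not \emph{strictly less than} it; the non-strict version is what is actually provable, and it is also all you need, since maximality of $(v-1,-t)$ in $\R^c_\q$ requires the nearest pixel of value greater than $v$ to lie at distance strictly greater than $t$. With that adjustment your argument goes through and is, if anything, more careful than the one in the paper.
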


\begin{proof}
The breadth-first search in \Cref{Thin4C} is similar to that in \Cref{Thick4C}, in that for each pixel $\q$ it finds the nearest pixel with value $v$.
The value $T_{\q,v}$ is the taxicab distance from pixel $\q$ to the closest pixel $\p$ of value $v$, which is equal to the depth of the breadth-first search from $\p$ when $\q$ is encountered.
If $\p$ is closer to $\q$ than any pixel of value greater than $v$, then bigrade $(v-1, -T_{\q,v})$ is maximal in $\R^c_\q$ and thus a complement entrance point for $\q$.
The if statement in \Cref{Thin4C} performs this check; if true, the algorithm appends bigrade $(v-1, -T_{\q,v})$ to $\mathsf{B}^c_\q$ during the breadth-first search for value $v$.
Thus, all points in $\mathsf{B}_\q^c$ are complement entrance points for pixel $\q$.

Furthermore, every complement entrance point $(v-1,-t)$ for pixel $\q$ is exhibited by some pixel $\p$ of value $v$ such that $\p$ is the closest pixel to $\q$ of value $v$ or greater and $d_T(\p,\q)=t$.
The breadth-first search in \cref{Thin4C} finds all such pixels, and thus all complement entrance points for each pixel $\q$.

Thus, \cref{Thin4C} results in $\mathsf{B}_\p^c=\mathcal{B}_\p^c$ for all pixels $\p$.
\end{proof}

Lastly, our conversion of complement entrance points to negative entrance points is correct by \cref{thinningAndComplement}.
We compute the negative entrance point between each pair of consecutive complement entrance points.
However, computing the first negative entrance point (with smallest $v$) requires special care.
In particular, for pixels with value $\min(\V)$, we compute the offset $t$ such that $(\min(\V),t)$ is a negative entrance point, as described previously.
This, together with $(\max(\V), -\infty)$, completes $\mathcal{B}^-_\p$.

If pixel $\p$ has no negative entrance point $(f(\p),t)$ with $t < 0$, then $\mathcal{B}^+_\p \cup \mathcal{B}^-_\p = \mathcal{B}_\p$.
Otherwise, we must remove the positive entrance point $(f(\p),0)$ before taking the union.
Regardless, at the conclusion of \cref{ConversionAlg}, the set $\mathsf{B}^+_\p$ is $\mathcal{B}_\p$, the set of minimal bigrades at which $\p$ is present in the value-offset bifiltration, for each pixel $\p$.

\subsection{Complexity Analysis}

Our thickening algorithm, \Cref{Thick4C}, iterates over $v \in \mathcal{V}$. 
At each iteration, we push pixels into a queue and pop them out until the queue is empty. 
To determine the complexity of this algorithm, we note that a pixel $\p$ is pushed into the queue only after a entrance point $(v,t)$ is found for $\p$.
Since each pixel has at most one entrance point for each value $v$, we see that $\p$ gets added to the queue at most once for each value $v$.
Hence, the while loop runs a maximum of $N$ times.
With each iteration of the while loop, the inner foreach loop runs at most $4$ times.
Thus, the entire thickening process requires $O(N|\mathcal{V}|)$ operations. 

The runtime complexity of the thinning algorithm, \Cref{Thin4C}, is the same: it also requires $O(N|\mathcal{V}|)$ operations.

Our conversion algorithm, \Cref{ConversionAlg}, iterates over all $N$ pixels.
For each pixel, the algorithm must iterate over all bigrades in $\mathsf{B}_\p^c$, of which there at most $|\mathcal{V}|$.
Thus, the conversion algorithm performs $O(N|\mathcal{V}|)$ operations.

Therefore, the runtime complexity for computing the value-offset bifiltration with respect the taxicab distance is $O(N|\mathcal{V}|)$.
This is optimal, since the entrance set may contain $|\mathcal{V}|$ bigrades for each of the $N$ pixels.

The space complexity of our algorithm is dominated by the amount of memory required to store the entrance sets for all pixels. Our algorithm requires $O(N|\mathcal{V}|)$ memory, which is again the size of the bifiltration.

\section{Euclidean Distance}\label{euclideanSection}

Our algorithm from the previous section can be modified to compute a close approximation of the value-offset bifiltration with respect to the Euclidean distance, which we make precise in this section. 

\subsection{Algorithm}

Our algorithm again relies on a breadth-first search for each value $v$.
However, unlike the taxicab distance, the depth of the search process does not correspond to the Euclidean distance between a pixel $\p$ and its nearest pixel with value $v$.
Still, we must guarantee that we remove pixels from the queue in order of increasing distance from pixels with a given value $v$.
We do this by two modifications.
First, we regard pixels as adjacent horizontally, vertically, or \emph{diagonally} (i.e., we regard pixels as being $8$-connected).
Second, we use a priority queue that not only stores pixels $\p$, but also a nearest pixel of value $v$ to each $\p$.
Specifically, elements in our priority queue are triples $(\p, \r, t)$ where $\p$ is a pixel, $\r$ is a closest pixel to $\p$ with value $v$, and $t = d_E(\p,\r)$ is the Euclidean distance between $\p$ and $\r$.
We refer to $\r$ as a ``root'' for $\p$ at value $v$.
The priority queue is kept sorted by the distance values.

Unfortunately, a breadth-first search from root pixels of value $v$ may sometimes fail to associate a pixel $\p$ with the closest root.
To explain this, consider \cref{VoronoiExample}. Suppose the three circled pixels ($\r_1$, $\r_2$, and $\r_3$) are the only root pixels. 
Recall that given a set of points called \emph{sites}, a \emph{Voronoi diagram} partitions the plane into convex regions, each of which contains all points closest to a particular site.
Regarding pixels $\r_1$, $\r_2$, and $\r_3$ as Voronoi sites, the diagonal lines in \cref{VoronoiExample} partition the domain into the three corresponding Voronoi regions.
The shaded pixels---whose centers lie in the middle Voronoi region---comprise a \emph{digital Voronoi region}. Notably, this digital Voronoi region is not $8$-connected, due to the shaded pixel in the upper right.
The possibility of disconnected pixels in digital Voronoi regions is noted elsewhere, such as by Cao et al., who call these disconnected pixels \emph{debris pixels} and observe that they only occur in sharp corners of Voronoi regions \cite{CaoEdelsTan}.

Specifically, debris pixels only occur if a Voronoi region extends between the centers of two adjacent 4-connected pixels to encompass the centers of one or more debris pixels. 
For example, in \cref{VoronoiExample} the column of pixels indicated by the arrow contains no pixel in the shaded digital Voronoi region.
The Voronoi region for site $\r_2$ passes between two adjacent pixels in this column.
In order to pass between the centers of two adjacent pixels, the portion of the Voronoi region containing the debris pixels must have width less than one unit.
Thus, the center of any debris pixel must be within a distance of $\frac{1}{2}$ from a Voronoi edge, a fact that will be important for the proof of \cref{DC_BFS_debris}.

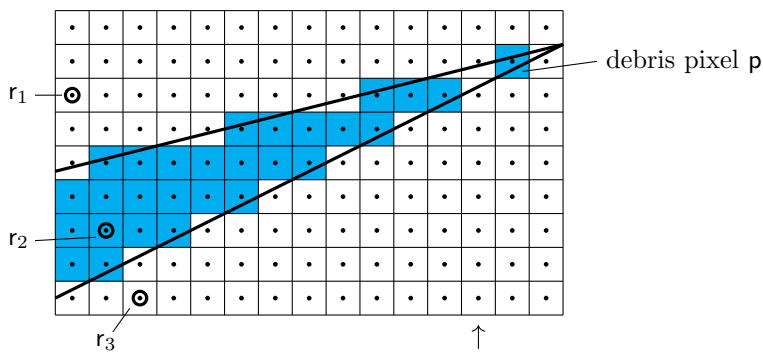
\begin{figure}[hb]
  \centering
  \begin{tikzpicture}[scale=0.45]
    \fill[cyan] (0,1) -- (2,1) -- (2,2) -- (4,2) -- (4,3) -- (6,3) -- (6,4) -- (8,4) -- (8,5) -- (10,5) -- (10,6) -- (12,6) -- (12,7) -- (9,7) -- (9,6) -- (5,6) -- (5,5) -- (1,5) -- (1,4) -- (0,4) -- cycle;
    \fill[cyan] (13,7) rectangle ++(1,1);

    \foreach \x in {0,1,...,15}{
      \draw (\x,0) -- ++(0,9);
    }
    \foreach \y in {0,1,...,9}{
      \draw (0,\y) -- ++(15,0);
    }
    
    \foreach \x in {0,1,...,14}{
      \foreach \y in {0,1,...,8}{
        \fill (\x+0.5,\y+0.5) circle (0.07);
      }
    }
    
    \draw[very thick] (2.5,0.5) circle (0.2);
    \draw[very thick] (1.5,2.5) circle (0.2);
    \draw[very thick] (0.5,6.5) circle (0.2);
    
    
    \draw[very thick] (0,0.5) -- (15,8);
    \draw[very thick] (0,4.25) -- (15,8);
    
    \node[right] at (16,7.5) {debris pixel $\p$};
    \draw (13.8,7.2) -- (16,7.5);
    \node[left] at (-0.5,6.5) {$\r_1$};
    \draw (-0.6,6.5) -- (0.2,6.5);
    \node[left] at (-0.5,2.2) {$\r_2$};
    \draw (-0.6,2.2) -- (1.2,2.3);
    \node at (1.5,-0.8) {$\r_3$};
    \draw (1.7,-0.5) -- (2.2,0.2);
    
    \draw[->] (12.5,-1) -- ++(0,0.7);
  \end{tikzpicture}
  \caption{The set of pixels whose centers lie in a Voronoi region might not be $8$-connected, as the shaded region demonstrates.}
  \label{VoronoiExample}
\end{figure}

A breadth-first search through adjacent pixels, starting with the three root pixels in \cref{VoronoiExample}, will fail to associate the debris pixel $\p$ with the root $\r_2$. Instead, the debris pixel will be encountered from root $\r_1$ or $\r_3$, despite these being at slightly greater distance than $\r_2$.
(In this example, $d_E(\p,\r_1) = d_E(\p,\r_3) = \sqrt{170}$, while $d_E(\p,\r_2) = 13$.)
Worse, a debris pixel may be arbitrarily far from the other pixels in the digital Voronoi region if the Voronoi region is sufficiently narrow.
Thus, it seems quite difficult to correctly compute the distance from a debris pixel to the closest root using a depth-first search. 
Our solution is to instead compute an approximate value-offset bifiltration, allowing small offset errors in entrance points for debris pixels.
This solution is similar to the approach of Cao et al., who produce a modified digital Voronoi diagram to guarantee that each digital Voronoi region is connected \cite{CaoEdelsTan}.

To explain our solution, consider the Voronoi diagram whose sites are all pixels with value $v$.
If there exists a path of $8$-connected pixels from $\p$ to a closest site $\r$ within the digital Voronoi region for $\r$, then we say that $\p$ is a \emph{non-debris pixel} at value $v$. 
In this case, \cref{ThickDC} below finds the correct entrance point for $\p$ at value $v$.
If $\p$ is a non-debris pixel at all $v$, then our algorithm finds the correct entrance set for $\p$.

However, if $\p$ is a debris pixel for some $v$, our breadth-first search computes an approximate entrance point whose offset is incorrect, but the error is less than one unit of distance.
For such a pixel, our algorithm computes entrance points that determine a region $\R'_\p$ which approximates $\R_\p$ in the following sense: for any value $v$, let $t_v = \min\{t \mid (v,t) \in \R_\p\}$ and $t'_v = \min\{t \mid (v,t) \in \R'_\p\}$; then we guarantee that $|t_v - t'_v| < 1$.
If $\mathcal{B}'_\p$ is the set of minimal points in a region $\R'_\p$ such that $|t_v - t'_v| < 1$ for all $v$, we say $\mathcal{B}'_\p$ is an \emph{approximate entrance set} for $\p$. 

We note that the error in our approximate entrance set is both small and rare.
The error in any offset distance is at most one unit, which is the same as the side length of a pixel. Thus, we consider this error to be small.
Furthermore, debris pixels only arise for very particular configurations of root pixels, which do not seem to occur often in digital images, so we consider the error to be rare.

\Cref{ThickDC} gives pseudocode for our thickening algorithm with respect to the Euclidean distance, which approximates the positive entrance sets for each pixel.
For each value $v$, we prepare for the breadth-first search: for each pixel $\r$ of value $v$, we push $(\r, \r, 0)$ into our priority queue, indicating that $\r$ is its own root.
Whenever a triple $(\p, \r, t)$ is removed from the queue, we check whether $(v,t)$ is an entrance point for $\p$.
If so, we append it to the working entrance set $\mathsf{B}^+_\p$ and push into the priority queue all pixels adjacent (horizontally, vertically, or diagonally) to $\p$ along with their distances to $\r$.
Specifically, each pixel $\q$ adjacent to $\p$ is pushed into the queue as a triple $(\q, \r, d_E(\q,\r))$.

\begin{algorithm}[h]
  \ForEach{$v\in\mathcal{V}$ in ascending order}{
    \tcp{prepare breadth-first search at value $v$}
    
    initialize empty queue $Q$
    
    \ForEach{pixel $\r$ such that $f(\r) = v$}{
        $Q$.push($(\r, \r, 0)$) 
    }
    
      \tcp{perform breadth-first search}
      \While{$Q$ is not empty}{
        
        $(\p, \r, t) \gets Q.\mathrm{pop}()$

        \If{$\mathsf{B}^+_\p=\emptyset$ or $t < \min{\{y \mid (x,y)\in \mathsf{B}^+_\p \}}$}{
            
            append $(v, t)$ to $\mathsf{B}^+_\p$ and mark $\p$ as visited
            
            \ForEach{unvisited pixel $\q$ adjacent to $\p$}{
                $Q.\mathrm{push}(\q, \r, d_E(\q, \r))$
            }
        }
      }
   } 
  \caption{Thickening with respect to Euclidean Distance}
  \label{ThickDC}
\end{algorithm}

In contrast to \cref{Thick4C}, \cref{ThickDC} cannot determine whether a pixel $\q$ has an entrance point the first time $\q$ is encountered at value $v$. For this reason, \cref{ThickDC} appends bigrades to $\mathsf{B}^+_\p$ when removing triples from the queue. Furthermore, \cref{ThickDC} remembers which pixels have been ``visited,'' in the sense that an entrance point has been found, to avoid pushing these pixels into the queue again at value $v$.

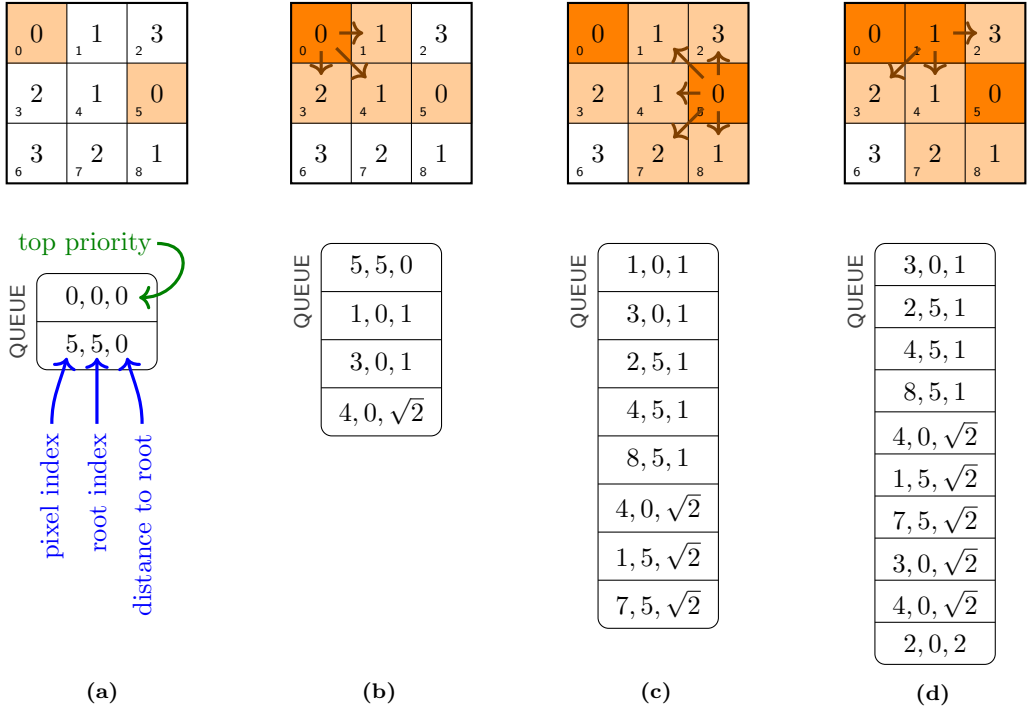
\begin{figure}[h!]
    \centering
  \begin{subfigure}[t]{0.23\textwidth}
    \centering
    \begin{tikzpicture}[scale=0.8, baseline={(0,0)}] 
        \fill[orange!40] (0,2) rectangle ++(1,1);
        \fill[orange!40] (2,1) rectangle ++(1,1);
        \draw[thick] (0,0) rectangle (3,3);
        \draw (1,0) rectangle (2,3);
        \draw (0,1) rectangle (3,2);
        
        \node at (0.5,2.5) {$0$};
        \node at (0.2,2.2) {\textsf{\tiny{0}}};
        \node at (1.5,2.5) {$1$};
        \node at (1.2,2.2) {\textsf{\tiny{1}}};
        \node at (2.5,2.5) {$3$};
        \node at (2.2,2.2) {\textsf{\tiny{2}}};
        \node at (0.5,1.5) {$2$};
        \node at (0.2,1.2) {\textsf{\tiny{3}}};
        \node at (1.5,1.5) {$1$};
        \node at (1.2,1.2) {\textsf{\tiny{4}}};
        \node at (2.5,1.5) {$0$};
        \node at (2.2,1.2) {\textsf{\tiny{5}}};
        \node at (0.5,0.5) {$3$};
        \node at (0.2,0.2) {\textsf{\tiny{6}}};
        \node at (1.5,0.5) {$2$};
        \node at (1.2,0.2) {\textsf{\tiny{7}}};
        \node at (2.5,0.5) {$1$};
        \node at (2.2,0.2) {\textsf{\tiny{8}}};
        
        \draw[rounded corners] (0.5,-1.5) rectangle (2.5,-3.1);
        \node at (1.5,-1.9) {$0,0,0$};
        \draw (0.5,-2.3) -- ++(2,0);
        \node at (1.5,-2.7) {$5,5,0$};
        \node[darkgray,rotate=90] at (0.2,-2.3) {\textsf{\footnotesize{QUEUE}}};
        
        \node[green!50!black] at (1.3,-1) {top priority};
        \draw[very thick, green!50!black,->] (2.5,-1) to[out=0,in=0,looseness=2] (2.2,-1.9);
        
        \node[blue,rotate=90,left] at (0.75,-4) {pixel index};
        \draw[very thick, blue,->] (0.75,-4) to[out=90,in=255] (1,-2.9);
        \node[blue,rotate=90,left] at (1.5,-4) {root index};
        \draw[very thick, blue,->] (1.5,-4) to[out=90,in=270] (1.5,-2.9);
        \node[blue,rotate=90,left] at (2.25,-4) {distance to root};
        \draw[very thick, blue,->] (2.25,-4) to[out=90,in=285] (2,-2.9);
        
        \draw[white] (0,-8) -- ++(1,0); 
    \end{tikzpicture}
    \caption{}
  \end{subfigure}
  \begin{subfigure}[t]{0.23\textwidth}
    \centering
    \begin{tikzpicture}[scale=0.8, baseline={(0,0)}]  
        \fill[orange] (0,2) rectangle ++(1,1);
        \fill[orange!40] (0,1) rectangle ++(3,1);
        \fill[orange!40] (1,2) rectangle ++(1,1);
        \draw[thick] (0,0) rectangle (3,3);
        \draw (1,0) rectangle (2,3);
        \draw (0,1) rectangle (3,2);
        
        \node at (0.5,2.5) {$0$};
        \node at (0.2,2.2) {\textsf{\tiny{0}}};
        \node at (1.5,2.5) {$1$};
        \node at (1.2,2.2) {\textsf{\tiny{1}}};
        \node at (2.5,2.5) {$3$};
        \node at (2.2,2.2) {\textsf{\tiny{2}}};
        \node at (0.5,1.5) {$2$};
        \node at (0.2,1.2) {\textsf{\tiny{3}}};
        \node at (1.5,1.5) {$1$};
        \node at (1.2,1.2) {\textsf{\tiny{4}}};
        \node at (2.5,1.5) {$0$};
        \node at (2.2,1.2) {\textsf{\tiny{5}}};
        \node at (0.5,0.5) {$3$};
        \node at (0.2,0.2) {\textsf{\tiny{6}}};
        \node at (1.5,0.5) {$2$};
        \node at (1.2,0.2) {\textsf{\tiny{7}}};
        \node at (2.5,0.5) {$1$};
        \node at (2.2,0.2) {\textsf{\tiny{8}}};
        
        \draw[orange!50!black,very thick,->] (.8,2.5) -- ++(.4,0);
        \draw[orange!50!black,very thick,->] (0.5,2.2) -- ++(0,-0.4);
        \draw[orange!50!black,very thick,->] (.75,2.25) -- ++(.5,-0.5);
        
        \draw[rounded corners] (0.5,-1) rectangle (2.5,-4.2);
        \node at (1.5,-1.4) {$5,5,0$};
        \draw (0.5,-1.8) -- ++(2,0);
        \node at (1.5,-2.2) {$1,0,1$};
        \draw (0.5,-2.6) -- ++(2,0);
        \node at (1.5,-3) {$3,0,1$};
        \draw (0.5,-3.4) -- ++(2,0);
        \node at (1.5,-3.8) {$4,0,\sqrt{2}$};
        \node[darkgray,rotate=90] at (0.2,-1.8) {\textsf{\footnotesize{QUEUE}}};
        
        \draw[white] (0,-8) -- ++(1,0); 
    \end{tikzpicture}
    \caption{}
  \end{subfigure}
  \begin{subfigure}[t]{0.23\textwidth}
    \centering
    \begin{tikzpicture}[scale=0.8, baseline={(0,0)}]    
        \fill[orange!40] (0,1) rectangle ++(2,2);
        \fill[orange!40] (2,2) rectangle ++(1,1);
        \fill[orange!40] (1,0) rectangle ++(2,1);
        \fill[orange] (0,2) rectangle ++(1,1);
        \fill[orange] (2,1) rectangle ++(1,1);
        \draw[thick] (0,0) rectangle (3,3);
        \draw (1,0) rectangle (2,3);
        \draw (0,1) rectangle (3,2);
        
        \node at (0.5,2.5) {$0$};
        \node at (0.2,2.2) {\textsf{\tiny{0}}};
        \node at (1.5,2.5) {$1$};
        \node at (1.2,2.2) {\textsf{\tiny{1}}};
        \node at (2.5,2.5) {$3$};
        \node at (2.2,2.2) {\textsf{\tiny{2}}};
        \node at (0.5,1.5) {$2$};
        \node at (0.2,1.2) {\textsf{\tiny{3}}};
        \node at (1.5,1.5) {$1$};
        \node at (1.2,1.2) {\textsf{\tiny{4}}};
        \node at (2.5,1.5) {$0$};
        \node at (2.2,1.2) {\textsf{\tiny{5}}};
        \node at (0.5,0.5) {$3$};
        \node at (0.2,0.2) {\textsf{\tiny{6}}};
        \node at (1.5,0.5) {$2$};
        \node at (1.2,0.2) {\textsf{\tiny{7}}};
        \node at (2.5,0.5) {$1$};
        \node at (2.2,0.2) {\textsf{\tiny{8}}};
        
        \draw[orange!50!black,very thick,->] (2.2,1.5) -- ++(-.4,0);
        \draw[orange!50!black,very thick,->] (2.25,1.75) -- ++(-0.5,0.5);
        \draw[orange!50!black,very thick,->] (2.5,1.8) -- ++(0,0.4);
        \draw[orange!50!black,very thick,->] (2.25,1.25) -- ++(-0.5,-0.5);
        \draw[orange!50!black,very thick,->] (2.5,1.2) -- ++(0,-0.4);
        
        \draw[rounded corners] (0.5,-1) rectangle (2.5,-7.4);
        \node at (1.5,-1.4) {$1,0,1$};
        \draw (0.5,-1.8) -- ++(2,0);
        \node at (1.5,-2.2) {$3,0,1$};
        \draw (0.5,-2.6) -- ++(2,0);
        \node at (1.5,-3) {$2,5,1$};
        \draw (0.5,-3.4) -- ++(2,0);
        \node at (1.5,-3.8) {$4,5,1$};
        \draw (0.5,-4.2) -- ++(2,0);
        \node at (1.5,-4.6) {$8,5,1$};
        \draw (0.5,-5) -- ++(2,0);
        \node at (1.5,-5.4) {$4,0,\sqrt{2}$};
        \draw (0.5,-5.8) -- ++(2,0);
        \node at (1.5,-6.2) {$1,5,\sqrt{2}$};
        \draw (0.5,-6.6) -- ++(2,0);
        \node at (1.5,-7) {$7,5,\sqrt{2}$};
        \node[darkgray,rotate=90] at (0.2,-1.8) {\textsf{\footnotesize{QUEUE}}};
        
        \draw[white] (0,-8) -- ++(1,0); 
    \end{tikzpicture}
    \caption{}
  \end{subfigure}
  \begin{subfigure}[t]{0.23\textwidth}
    \centering
    \begin{tikzpicture}[scale=0.8, baseline={(0,0)}]  
        \fill[orange!40] (0,1) rectangle (3,3);
        \fill[orange!40] (1,0) rectangle (3,3);
        \fill[orange] (0,2) rectangle ++(2,1);
        \fill[orange] (2,1) rectangle ++(1,1);
        \draw[thick] (0,0) rectangle (3,3);
        \draw (1,0) rectangle (2,3);
        \draw (0,1) rectangle (3,2);
        
        \node at (0.5,2.5) {$0$};
        \node at (0.2,2.2) {\textsf{\tiny{0}}};
        \node at (1.5,2.5) {$1$};
        \node at (1.2,2.2) {\textsf{\tiny{1}}};
        \node at (2.5,2.5) {$3$};
        \node at (2.2,2.2) {\textsf{\tiny{2}}};
        \node at (0.5,1.5) {$2$};
        \node at (0.2,1.2) {\textsf{\tiny{3}}};
        \node at (1.5,1.5) {$1$};
        \node at (1.2,1.2) {\textsf{\tiny{4}}};
        \node at (2.5,1.5) {$0$};
        \node at (2.2,1.2) {\textsf{\tiny{5}}};
        \node at (0.5,0.5) {$3$};
        \node at (0.2,0.2) {\textsf{\tiny{6}}};
        \node at (1.5,0.5) {$2$};
        \node at (1.2,0.2) {\textsf{\tiny{7}}};
        \node at (2.5,0.5) {$1$};
        \node at (2.2,0.2) {\textsf{\tiny{8}}};
        
        \draw[orange!50!black,very thick,->] (1.25,2.25) -- ++(-0.5,-0.5);
        \draw[orange!50!black,very thick,->] (1.5,2.2) -- ++(0,-0.4);
        \draw[orange!50!black,very thick,->] (1.8,2.5) -- ++(0.4,0);
        
        \draw[rounded corners] (0.5,-1) rectangle (2.5,-8);
        \node at (1.5,-1.4) {$3,0,1$};
        \draw (0.5,-1.7) -- ++(2,0);
        \node at (1.5,-2.1) {$2,5,1$};
        \draw (0.5,-2.4) -- ++(2,0);
        \node at (1.5,-2.8) {$4,5,1$};
        \draw (0.5,-3.1) -- ++(2,0);
        \node at (1.5,-3.5) {$8,5,1$};
        \draw (0.5,-3.8) -- ++(2,0);
        \node at (1.5,-4.2) {$4,0,\sqrt{2}$};
        \draw (0.5,-4.5) -- ++(2,0);
        \node at (1.5,-4.9) {$1,5,\sqrt{2}$};
        \draw (0.5,-5.2) -- ++(2,0);
        \node at (1.5,-5.6) {$7,5,\sqrt{2}$};
        \draw (0.5,-5.9) -- ++(2,0);
        \node at (1.5,-6.3) {$3,0,\sqrt{2}$};
        \draw (0.5,-6.6) -- ++(2,0);
        \node at (1.5,-7.0) {$4,0,\sqrt{2}$};
        \draw (0.5,-7.3) -- ++(2,0);
        \node at (1.5,-7.7) {$2,0,2$};
        \node[darkgray,rotate=90] at (0.2,-1.8) {\textsf{\footnotesize{QUEUE}}};
    \end{tikzpicture}
    \caption{}
    \label{testlabel}
  \end{subfigure}
  \caption{Illustration of \Cref{ThickDC} for a $3\times3$ image, showing initialization (a) and three iterations of the while loop (b, c, d).}
  \label{EuclideanDistExample}
\end{figure}

\begin{example}
We illustrate \cref{ThickDC} via a simple example displayed in \cref{EuclideanDistExample}. 
The input image consists of a $3\times 3$ image, shown at the top of \cref{EuclideanDistExample}.
Each pixel has a value in $\{0,1,2,3\}$, indicated by the large integers centered in each pixel.
We label the pixels $\p_0, \p_1, \ldots, \p_8$; the index of each pixel is displayed as a small integer in its lower-left corner.

The algorithm prepares for the breadth-first search at value $v=0$.
Each pixel with this value is a root for the breadth-first search and is added to the queue. 
Pixels added to the queue are shaded light orange in \cref{EuclideanDistExample}.
For simplicity, the priority queue illustrated in \cref{EuclideanDistExample} displays pixel indexes and distance value; e.g., $(\p_5, \p_5, 0)$ is shown as $5, 5, 0$.

The algorithm then enters the while loop, where the highest-priority entry is removed from the queue.
The first triple removed is $(\p_0, \p_0, 0)$.
This pixel has entrance point $(v, t) = (0,0)$, which we append to $\mathcal{B}^+_{\p_0}$.
Pixel $\p_0$ is marked as visited, which is indicated in \cref{EuclideanDistExample}(b) by dark orange shading.
We then check all neighbors of pixel $\p_0$; that is, pixels $\p_1$, $\p_2$, and $\p_3$, as illustrated in \cref{EuclideanDistExample}(b).
Since none of them have been visited, we add them to the queue, each with root $\p_0$.

Since the queue is not empty, we remove the top entry from the queue; that is, we remove $(\p_5,\p_5,0)$.
We append entrance point $(0,0)$ to $\mathsf{B}^+_{\p_5}$ and mark $\p_5$ as visited, as indicated in \cref{EuclideanDistExample}(c).
We then check all neighbors of pixel $\p_5$; that is, pixels $\p_1$, $\p_2$, $\p_4$, $\p_7$, and $\p_8$. 
Since none of these neighbors have been visited, they are all added to the queue.

We next remove entry $(\p_1,\p_0,1)$ from the queue. 
This triple tells us that pixel $\p_1$ is distance $1$ from the closest pixel with value $0$ (specifically, $\p_0$), so we append entrance point $(0,1)$ to $\mathsf{B}^+_{\p_1}$ and mark $\p_1$ as visited.
We then check all neighbors of pixel $\p_1$; three of these neighbors are unvisited, so we add them to the queue.
Note that pixel $\p_1$ appears again in the queue, in triple $(\p_1, \p_5, \sqrt{2})$, which tells us that $\p_1$ is distance $\sqrt{2}$ from $\p_5$. When this triple is removed from the queue, no additional bigrade will be added to $\mathsf{B}^+_{\p_1}$.

At this point, we have found entrance points for pixels $\p_0$, $\p_5$ and $\p_1$ at value $v=0$.
The algorithm continues until the queue is empty, and then repeats this process for other values of $v$.
\end{example}

\Cref{ThinDC}, our thinning algorithm with respect to the Euclidean distance, is similar to \cref{ThickDC}. We maintain a priority queue in which each element is a triple of values: a pixel, a root, and the distance between the pixel and root. This priority queue is kept sorted by the distance values.
Distance values are negated in bigrades appended to $\mathsf{B}^c_\p$.
This approximates the complement entrance sets for each pixel.

\begin{algorithm}[h]
  \ForEach{$v \in \V$ such that $v > \min(\V)$ in descending order}{
    \tcp{prepare for breadth-first search at value $v$}
    
    initialize empty queue $Q$
    
    \ForEach{pixel $\p$ such that $f(\p) = v$}{
        append bigrade $(v-1, 0)$ to $\mathsf{B}^c_\p$
        
        $Q.\mathrm{push}((\p, \p, 0))$
    }
    
    \tcp{perform breadth-first search}
    \While{queue $Q$ is not empty}{
        $(\p, \r, t) \gets Q.\mathrm{pop}()$

        \If{$\mathsf{B}^+_\p = \emptyset$ or $t < \min{\{-y \mid (x,y)\in \mathsf{B}^c_\p \}}$}{
            append $(v-1, -t)$ to $\mathsf{B}^c_\p$ and mark $\p$ as visited
            
            \For{unvisited pixel $\q$ adjacent to $\p$}{
                $Q.\mathrm{push}((\q, \r, d_E(\q, \r)))$
            }
        }
      }
   } 

  \caption{Thinning with respect to Euclidean Distance} \label{ThinDC}
\end{algorithm}

Finally, we apply our conversion algorithm, \cref{ConversionAlg}, to produce an approximate entrance set for each pixel, using the output of \cref{ThickDC} and \cref{ThinDC}.
(Recall that \cref{ConversionAlg} does not depend on a particular notion of distance.)

\subsection{Proof of Approximation}

We now establish the correctness of our algorithm for non-debris pixels and establish a bound on the error for debris pixels.

\begin{lemma}\label{DC_BFS_nondebris}
    If $\p$ is a non-debris pixel at value $v$, with entrance point $(v,t)$, then \cref{ThickDC} appends this entrance point to $\mathsf{B}^+_\p$.
\end{lemma}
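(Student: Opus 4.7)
The plan is to trace the iteration of \cref{ThickDC} at value $v$ and show by induction along a non-debris path that the first pop of $\p$ from the priority queue records the correct distance. Let $\r$ be a pixel of value $v$ at Euclidean distance $t$ from $\p$; by the non-debris hypothesis, fix a path $\p_0 = \r, \p_1, \ldots, \p_k = \p$ of $8$-connected pixels lying inside the digital Voronoi region of $\r$, so that $\r$ is a closest pixel of value $v$ to every $\p_i$.

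I will prove by induction on $i$ that during the $v$-th iteration, the first triple popping $\p_i$ has distance exactly $d_E(\p_i, \r)$. The base case $i=0$ is immediate: the triple $(\r, \r, 0)$ is pushed at initialization, and because every queued triple has nonnegative distance, $\r$ pops first at distance $0 = d_E(\r, \r)$. For the inductive step, when $\p_{i-1}$ pops at distance $d_E(\p_{i-1}, \r)$, it pushes $(\p_i, \r, d_E(\p_i, \r))$ (assuming $\p_i$ is still unvisited), so the priority queue order forces the first pop of $\p_i$ to have distance at most $d_E(\p_i, \r)$; since $\p_i$ lies in the Voronoi region of $\r$, no triple $(\p_i, \r', d_E(\p_i, \r'))$ arising from another root $\r'$ of value $v$ can have strictly smaller distance, so the first pop of $\p_i$ has distance exactly $d_E(\p_i, \r)$.

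Applying the induction at $i = k$ shows that $\p$ is popped with distance $t = d_E(\p, \r)$, and the if-statement in \cref{ThickDC} then appends $(v,t)$ to $\mathsf{B}^+_\p$: any earlier entrance point $(v',t') \in \mathsf{B}^+_\p$ must have $v' < v$ and, by minimality of $(v,t)$ in $\R^+_\p$, must satisfy $t' > t$, so the inequality in the if-statement is triggered.

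The main obstacle is the edge case hidden in the inductive step: the push of $(\p_i, \r, d_E(\p_i, \r))$ happens only if $\p_i$ is unvisited when $\p_{i-1}$ pops, and this could fail if the distances $d_E(\p_i, \r)$ along the chosen path are non-monotonic, allowing $\p_i$ to be popped from a different root before $\p_{i-1}$ is processed. I plan to address this either by refining the path---for instance, by rasterizing the straight segment from $\r$ to $\p$, which stays inside the convex Voronoi region and yields monotonic distances---or by strengthening the inductive hypothesis to show that even an earlier pop of $\p_i$ must record the distance $d_E(\p_i, \r)$, again using Voronoi membership to bound the distances of competing roots.
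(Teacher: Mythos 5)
Your approach is essentially the paper's: an induction outward from the root $\r$, using the non-debris path and the convexity of Voronoi regions to argue that $\p$ is reached with the exact distance $d_E(\p,\r)$, plus the observation that Voronoi membership rules out a smaller pop from a competing root. The difference is the induction variable, and the gap you flag at the end is real: inducting on the path index $i$ breaks down when $d_E(\p_i,\r)$ is not monotone along the chosen path, since $\p_i$ may then be popped before $\p_{i-1}$ is processed. The paper closes exactly this gap by inducting on the distance $d_E(\cdot,\r)$ itself rather than on path position: convexity plus the non-debris hypothesis produce \emph{some} neighbor $\q$ of $\p$ in the same digital Voronoi region with $d_E(\q,\r) < d_E(\p,\r)$, the inductive hypothesis applies to $\q$ wherever it sits, and the priority queue guarantees $\q$ is finalized before any triple of distance $d_E(\p,\r)$ is popped. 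This is your second proposed repair, and it is the one to pursue. Your first repair (rasterizing the segment from $\r$ to $\p$) is shakier: the segment stays in the convex Voronoi region, but the centers of the pixels it passes through need not --- that failure mode is precisely the debris-pixel phenomenon --- so the rasterized path need not lie in the digital Voronoi region. Your closing argument that the if-statement fires (any prior bigrade in $\mathsf{B}^+_\p$ has $v' < v$ and hence $t' > t$ by minimality of $(v,t)$) is correct and is a detail the paper leaves implicit.
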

\begin{proof}
    Let $\r$ be a root pixel for the digital Voronoi region containing $\p$.
    We prove this lemma by induction on the distance from $\r$.
    If $\p$ is adjacent to $\r$, then the lemma is trivially true.
    
    Otherwise, assume the lemma is true for all pixels $\q$ with $d_E(\q,\r) < d_E(\p,\r)$.
    Since Voronoi regions are convex and $\p$ is a non-debris pixel at value $v$, there is a pixel $\q$ adjacent to $\p$ in the same digital Voronoi region for $\r$ such that $d_E(\q,\r) < d_E(\p,\r)$.
    By the inductive hypothesis, the entrance point for $\q$ at value $v$ is found and $(\p, \r, d_E(\p,\r))$ is added to the queue. Since $\p$ is not closer to any other root with value $v$, this ensures that the algorithm discovers the entrance point.
\end{proof}

\begin{lemma}\label{DC_BFS_debris}
    Let $\p$ be a debris pixel at value $v$ that is closest to root $\r_2$. \Cref{ThickDC} discovers $\p$ from a root $r_1$ such that $d_E(\p, \r_1) < d_E(\p,\r_2) + 1$.
\end{lemma}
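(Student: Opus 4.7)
The plan is to exhibit a root $\r_*$ of value $v$ satisfying $d_E(\p,\r_*) < d_E(\p,\r_2)+1$ and to show that \cref{ThickDC} eventually pushes the triple $(\p,\r_*,d_E(\p,\r_*))$ onto its priority queue while processing value $v$. Since triples are popped in order of increasing distance, the root $\r_1$ that actually discovers $\p$ would then satisfy $d_E(\p,\r_1) \le d_E(\p,\r_*) < d_E(\p,\r_2)+1$, as required.

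First I would set up the geometry. By the observation recorded just before the lemma, the center of $\p$ lies within distance strictly less than $1/2$ of some Voronoi edge $E$. Take $\r_*$ to be the root of the Voronoi region on the opposite side of $E$ from $\r_2$ (necessarily of value $v$), and let $m$ be the foot of the perpendicular from the center of $\p$ to $E$. Because $m$ lies on the bisector of $\r_*$ and $\r_2$, we have $d_E(m,\r_*)=d_E(m,\r_2)$, and two applications of the triangle inequality yield
\[
d_E(\p,\r_*) \le d_E(\p,m)+d_E(m,\r_*) = d_E(\p,m)+d_E(m,\r_2) \le 2\,d_E(\p,m)+d_E(\p,\r_2) < 1+d_E(\p,\r_2).
\]

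The reachability step amounts to identifying an $8$-neighbor of $\p$ that pushes the desired triple. A short calculation shows that at least one $8$-neighbor $\q$ of $\p$ has its center on the $\r_*$-side of $E$: the maximum, over the eight unit-grid displacement vectors, of the signed projection onto the unit normal to $E$ is at least $1$, which exceeds the perpendicular distance $< 1/2$ from $\p$'s center to $E$. Provided such a $\q$ is a non-debris pixel at value $v$, \cref{DC_BFS_nondebris} implies that $\q$ is visited from its nearest root (namely $\r_*$), and when $\q$ is popped the triple $(\p,\r_*,d_E(\p,\r_*))$ is pushed into the queue, completing the argument in this typical case.

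The main obstacle is the pathological case in which every $8$-neighbor of $\p$ lying on the $\r_*$-side of $E$ is itself a debris pixel at value $v$, so that \cref{DC_BFS_nondebris} does not apply directly. I would handle this by strong induction on the pop order at value $v$, using the statement of this very lemma as the inductive hypothesis: any pixel popped before $\p$ is discovered from a root whose distance differs from its true nearest-root distance by strictly less than one. Combined with the local geometry of the narrow Voronoi neck, this should allow the offset eventually pushed for $\p$ to be bounded by $d_E(\p,\r_*)$. Maintaining the strict inequality $<1$ while propagating the error across a chain of consecutive debris pixels is where I expect the most delicate bookkeeping to be needed.
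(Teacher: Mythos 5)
Your distance bound is essentially the paper's entire proof, in a slightly streamlined form: the paper picks a point $\nu$ on the Voronoi edge with $d_E(\p,\nu)<\tfrac12$, sets $\alpha=d_E(\nu,\r_1)=d_E(\nu,\r_2)$, and applies the triangle inequality twice to get $|d_E(\p,\r_1)-d_E(\p,\r_2)|<1$; your foot-of-perpendicular $m$ plays the role of $\nu$ and your one-line chain of inequalities is the same computation. So the geometric core of your argument matches the paper and is correct.

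Where you differ is that you also try to prove the algorithmic half of the claim --- that \cref{ThickDC} actually pushes a triple $(\p,\r_*,d_E(\p,\r_*))$ for the adjacent-region root $\r_*$, so that the root that in fact discovers $\p$ is at least as close. The paper's proof does not address this at all: it silently identifies ``the root from which $\p$ is discovered'' with ``the root on the other side of the nearby Voronoi edge'' and proves only the distance inequality for the latter. Your observation that the discovering root could a priori be something else, and that reachability from $\r_*$ needs an argument, is a legitimate point that the paper glosses over. However, your treatment of it is incomplete by your own admission: the $8$-neighbor $\q$ you produce on the $\r_*$-side of the edge line need not lie in the digital Voronoi region of $\r_*$ (it could belong to a third region), need not be a non-debris pixel, and your proposed induction over chains of debris pixels is only sketched, with the propagation of the strict $<1$ bound left open. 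So, measured against the paper, you have proved everything the paper proves and correctly identified a gap the paper leaves unaddressed, but you have not closed that gap; if you intend to include the reachability argument, it needs to be completed, and if you drop it you should at least note (as the paper implicitly assumes) that $\p$ is discovered from some root no farther than $\r_*$ because the priority queue pops triples in order of increasing distance.
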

\begin{proof}
    Since $\p$ is a debris pixel, the center of $\p$ is within $\frac{1}{2}$ unit from a Voronoi edge between $\r_2$ and some other root $\r_1$, as noted earlier.
    That is, there is a point $\nu$ on the Voronoi edge such that $d_E(\p, \nu) < \frac{1}{2}$. (See \cref{equidistant}.)

    Since $\nu$ lies on the Voronoi edge, $\nu$ is equidistant from $\r_1$ and $\r_2$.
    Let $\alpha = d_E(\nu, \r_1) = d_E(\nu, \r_2)$.
    By the triangle inequality,
    \[ d_E(\p, \r_1) \le d_E(\p, \nu) + d_E(\nu, \r_1) < \frac{1}{2} + \alpha, \]
    and also
    \[ d_E(\p, \r_2) \le d_E(\p, \nu) + d_E(\nu, \r_2) < \frac{1}{2} + \alpha. \]
    Applying the triangle inequality again,
    \[ \left| d_E(\p, \r_1) - d_E(\p, \r_2) \right| \le \left| d_E(\p, \r_1) - \alpha \right| + \left| d_E(\p, \r_2) - \alpha \right| < \frac{1}{2} + \frac{1}{2} = 1. \]
    Since $d_E(\p, \r_1) > d_E(\p, \r_2)$, the result follows.
\end{proof}
\begin{figure}[htb]
  \centering
  \begin{tikzpicture}[scale=0.4]
    \foreach \x in {0,1,...,15}{
      \draw[gray] (\x,0) -- ++(0,9);
    }
    \foreach \y in {0,1,...,9}{
      \draw[gray] (0,\y) -- ++(15,0);
    }

    \foreach \x in {0,1,...,14}{
      \foreach \y in {0,1,...,8}{
        \fill (\x+0.5,\y+0.5) circle (0.07);
      }
    }
    
    \fill (13.4706, 7.6176) circle (0.08);
    
    \draw[very thick] (2.5,0.5) circle (0.2);
    \draw[very thick] (1.5,2.5) circle (0.2);
    \draw[very thick] (0.5,6.5) circle (0.2);
    
    
    \draw[very thick] (0,0.5) -- (15,8);
    \draw[very thick] (0,4.25) -- (15,8);
    
    \draw[orange,very thick] (13,7) circle (1.3);
    
    \node[left] at (-0.5,6.5) {$\r_1$};
    \draw (-0.6,6.5) -- (0.2,6.5);
    \node[left] at (-0.5,2.2) {$\r_2$};
    \draw (-0.6,2.2) -- (1.2,2.3);
    \node at (1.5,-0.8) {$\r_3$};
    \draw (1.7,-0.5) -- (2.2,0.2);
    
    \draw[gray] (22,1) -- (22,9);
    \draw[gray] (25.077,2.444) -- (25.077,7.556);
    \draw[gray] (18.923,2.444) -- (18.923,7.556);
    \draw[gray] (18,5) -- (26,5);
    \draw[gray] (19.444,1.923) -- (24.556,1.923);
    \draw[gray] (19.444,8.077) -- (24.556,8.077);
    
    \fill (23.538,6.538) circle (0.15);
    \fill (20.462,6.538) circle (0.15);
    \fill (20.462,3.462) circle (0.15);
    \fill (23.538,3.462) circle (0.15);
    
    \draw[very thick] (18.422,3.211) -- (25.578,6.789);
    \draw[very thick] (18.038,5.548) -- (25.239,7.348);
    
    \fill (23.448, 6.9) circle (0.15);
    
    \node at (23.1,6.1) {$\p$};
    \node at (23.2,7.3) {$\nu$};
    
    \draw[orange,very thick] (22,5) circle (4);
    \draw[orange,very thick] (12.9,8.29) -- (21.7,8.98);
    \draw[orange,very thick] (12.3,5.9) -- (20.2,1.42);
  \end{tikzpicture}
  \caption{Voronoi edges between sites $\r_1$, $\r_2$, and $\r_3$ are shown. Pixel $\p$ is a debris pixel. Point $\nu$ is equidistant from $\r_1$ and $\r_2$.}
  \label{equidistant}
\end{figure}
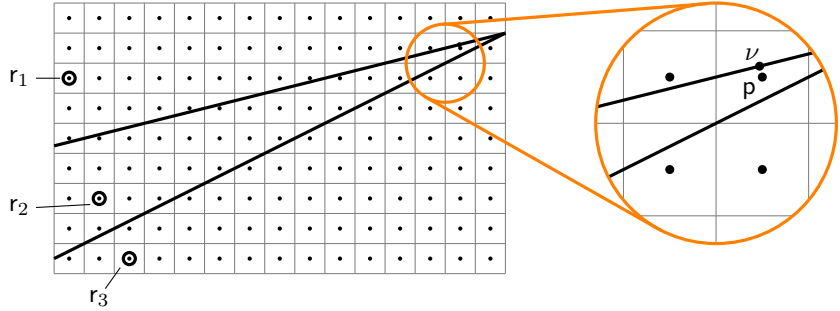

We now show that applying \cref{ThickDC} and \cref{ThinDC}, followed by \cref{ConversionAlg}, produces an approximate entrance set for each pixel. This gives an approximation of the value-offset bifiltration with respect to the Euclidean distance such that, for any value $v$, the offset $t$ at which a pixel appears in the bifiltration is within $1$ unit of the correct offset.

\begin{theorem}
    The sequence of algorithms \Cref{ThickDC}, \Cref{ThinDC}, and \Cref{ConversionAlg} compute an approximate entrance set for each pixel.
\end{theorem}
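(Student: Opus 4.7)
The plan is to trace the approximation error through each of the three algorithms and show it remains strictly bounded by one unit of distance. First I would combine \cref{DC_BFS_nondebris} and \cref{DC_BFS_debris} to show that \cref{ThickDC} produces an approximate version of the positive entrance set: for every pixel $\p$ and every $v \in \V$, if $\p$ has true positive entrance offset $t^+_v$, then \cref{ThickDC} records an offset $t'^+_v$ satisfying $|t'^+_v - t^+_v| < 1$. For non-debris pixels this is equality; for debris pixels discovered from an incorrect root, \cref{DC_BFS_debris} gives the strict bound $t'^+_v - t^+_v < 1$ via the triangle inequality. Hence the region determined by $\mathsf{B}^+_\p$ approximates $\R^+_\p$ to within one unit in the offset direction.

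Next, I would observe that \cref{ThinDC} has the same priority-queue breadth-first structure as \cref{ThickDC}, with only the direction of iteration over $\V$ and the sign of the recorded offset differing. The Voronoi and debris-pixel geometry arguments underlying \cref{DC_BFS_nondebris} and \cref{DC_BFS_debris} depend only on the locations of root pixels, not on filtration direction, so they carry over verbatim: the complement entrance offsets $-t'^c_v$ recorded in $\mathsf{B}^c_\p$ satisfy $|t'^c_v - t^c_v| < 1$ for every $v$. By the complementation identification between thinning and thickening, the region determined by $\mathsf{B}^c_\p$ approximates $\R^c_\p$ to within one unit.

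Finally, I would verify that \cref{ConversionAlg} preserves the approximation when transforming complement entrance points into negative entrance points. By \cref{thinningAndComplement}, when the inputs are exact the conversion is exact; the bigrades $\B^c_\p$ and $\B^-_\p$ are dual descriptions of the same stairstep. Since each computed offset $t = -\prev(-t_\mathbf{b}, \D_\p)$ is drawn from the genuine distance set $\D_\p$ (not from perturbed values), the resulting negative entrance set describes a region whose boundary is within one unit of the true boundary of $\R^-_\p$. The special cases handled explicitly by \cref{ConversionAlg}, namely the first negative entrance point when $f(\p) = \min(\V)$, the potential removal of $(f(\p),0)$ from $\mathsf{B}^+_\p$, and the appended bigrade $(\max(\V), -\infty)$, do not introduce further error. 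Combining the approximate positive and negative pieces and taking minimal points yields a set $\mathsf{B}_\p$ whose underlying region $\R'_\p$ satisfies $|t_v - t'_v| < 1$ for all $v$, which is the definition of an approximate entrance set.

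The main obstacle will be the dual-description step in the conversion analysis: I need to check that a perturbation of at most one unit in the complement entrance offsets cannot cause $\prev(-t_\mathbf{b}, \D_\p)$ to select an element of $\D_\p$ that is more than one unit away from the true negative entrance offset between two consecutive complement entrance points. The cleanest route is to show that the stairstep determined by the approximate $\mathsf{B}^c_\p$ is the exact $\B^c_{\p'}$ of some perturbed pixel configuration within the same one-unit tolerance, so that \cref{thinningAndComplement} applies directly and transports the error bound to the negative side without amplification.
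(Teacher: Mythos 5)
Your proposal follows essentially the same route as the paper's proof: \cref{DC_BFS_nondebris} and \cref{DC_BFS_debris} give the sub-unit error bound for \cref{ThickDC}, the same argument transfers to \cref{ThinDC} by the thinning/thickening duality, and \cref{ConversionAlg} is argued to preserve the bound. The one place you go beyond the paper is the ``main obstacle'' you flag at the end: the worry that $\prev(-t_\mathbf{b},\D_\p)$ applied to a perturbed complement offset could amplify the error is a legitimate concern, but the paper's own proof does not address it either---it simply asserts that the conversion step preserves the property $t-1 < t' \le t$. So you are not missing any ingredient that the paper supplies; if anything, completing the argument you sketch (e.g., noting that both the true and approximate complement offsets are genuine elements of $\D_\p$, so the $\prev$ of the approximate value is sandwiched between the true nearest-root distance and the approximate one) would make the proof more rigorous than the published version.
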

\begin{proof}
    Fix a value $v$ and a pixel $\p$ that has an entrance point $(v,t)$.
    
    If $\p$ is a non-debris pixel at value $v$, then by \cref{DC_BFS_nondebris}, \cref{ThickDC} computes the correct entrance point $(v,t)$.
    If $\p$ is a debris pixel at value $v$, then by \cref{DC_BFS_debris}, \cref{ThickDC} computes an approximate entrance point $(v,t')$ such that $t < t' < t+1$, which is appended to $\mathsf{B}^+_\p$ if there does not exist $\mathbf{b} \in \mathsf{B}^+_\p$ with $\mathbf{b} \prec (v,t')$.
    In either case, \cref{ThickDC} returns an approximate positive entrance set.
    
    Similarly, \cref{ThinDC} returns an approximate complement entrance set. 
    Specifically, if $\p$ has a complement entrance point $(v,t)$, then the approximate complement entrance point $(v,t')$ computed by \cref{ThinDC} satisfies $(t-1 < t' \le t)$.
    Applied to this approximate complement entrance set, \cref{ConversionAlg} computes an approximate negative entrance set satisfies the same property: if $\p$ has a negative entrance point $(v,t)$, then the computed entrance point $(v,t')$ satisfies $(t-1 < t' \le t)$.
    
    Combining the approximate positive and negative entrance sets, as in \cref{ConversionAlg}, produces an approximate entrance set $\mathsf{B}^+_\p$.
\end{proof}
 
\subsection{Complexity Analysis}

Our thickening algorithm performs a breadth-first search for each $v \in \mathcal{V}$.
Since each pixel has at most $8$ neighbors (horizontally, vertically, and diagonally), \Cref{ThickDC} adds each pixel to the queue at most $8$ times for each value $v$.
This implies that the size of the queue is always $O(N)$; consequently each addition and removal from the priority queue is performed in $O(\log N)$ time.
There are $O(N)$ addition and removal operations, so the process of finding all positive entrance points for any particular value $v$ requires $O(N \log N)$ time.
The runtime complexity of \Cref{ThickDC} is therefore $O(N|\mathcal{V}|\log N)$.

Similarly, the runtime complexity of \Cref{ThinDC} is $O(N|\mathcal{V}|\log N)$.
The conversion algorithm requires $O(N|\mathcal{V}|)$ operations as before.
Therefore, the runtime complexity for computing the value-offset bifiltration with respect to Euclidean distance is $O(N|\mathcal{V}|\log N)$.

The space complexity is again dominated by the memory required to store the bifiltration, which is $O(N|\mathcal{V}|)$.

\section{Experimental Results}\label{resultsSection}

\subsection{Real Images vs.\ Generated Images}

We compared our algorithms on both real images and randomly generated images, computing bifiltrations with respect to the taxicab and Euclidean distances.
Our real images are a collection of eleven artistic images, converted to grayscale.
We created a randomly generated image of the same pixel dimensions as each real image, with each pixel color value a randomly selected integer from $0$ to $255$. 
Our collection of images is available in our code repository.\footnote{\url{ https://github.com/ThongVoHien/TopologyAnalysis}; see especially the \textsf{data files} directory.}
Experiments were performed on a Intel Core i5-7200U 2.5GHz processor with 16GB memory. 

\Cref{runtimeComparison} compares the runtimes of our algorithms to the total number of pixels in each image.
While the runtime generally increases with the number of pixels, we observe that runtimes for the algorithm with respect to the Euclidean distance algorithm are roughly ten times as long as for the taxicab distance. 
This is due to the additional complexity of maintaining the priority queue for the Euclidean distance, as well as the fact that the entrance sets are larger with respect to the Euclidean distance. 

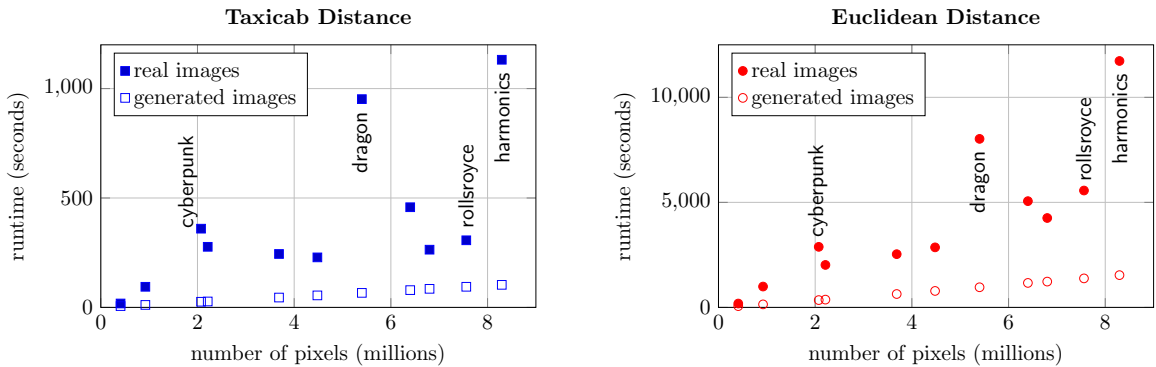
\begin{figure}[h] 
  \centering
    \pgfplotsset{yticklabel style={/pgf/number format/fixed, /pgf/number format/precision=5},scaled y ticks=false}
    \begin{tikzpicture}[scale=0.82]
      \begin{axis}[width=3.4in, height=2.3in, title={\textbf{Taxicab Distance}}, xlabel={number of pixels (millions)}, ylabel={runtime (seconds)}, xmin=0, xmax=9, ymin=0, ymax=1200, grid, legend pos=north west, legend cell align=left]
        \addplot+[only marks,mark=square*,blue] table[x=pixels,y=real4C]{runtimes.data};
        \addlegendentry{real images}
        \addplot+[only marks,mark=square,blue] table[x=pixels,y=gen4C]{runtimes.data};
        \addlegendentry{generated images}
        \node[left,rotate=90] at (8.3,1110) {\textsf{harmonics}};
        \node[right,rotate=90] at (7.6,330) {\textsf{rollsroyce}};
        \node[right,rotate=90] at (1.8,330) {\textsf{cyberpunk}};
        \node[left,rotate=90] at (5.4,930) {\textsf{dragon}};
      \end{axis}
    \end{tikzpicture} \hspace{20pt}
    \begin{tikzpicture}[scale=0.82]
      \begin{axis}[width=3.4in, height=2.3in, title={\textbf{Euclidean Distance}}, xlabel={number of pixels (millions)}, ylabel={runtime (seconds)}, xmin=0, xmax=9, ymin=0, ymax=12500, grid, legend pos=north west, legend cell align=left]
        \addplot+[only marks,mark=*,red,mark options={fill=red}] table[x=pixels,y=realDC]{runtimes.data};
        \addlegendentry{real images}
        \addplot+[only marks,mark=o,red] table[x=pixels,y=genDC]{runtimes.data};
        \addlegendentry{generated images}
        \node[left,rotate=90] at (8.3,11500) {\textsf{harmonics}};
        \node[right,rotate=90] at (7.6,5650) {\textsf{rollsroyce}};
        \node[right,rotate=90] at (2.05,3000) {\textsf{cyberpunk}};
        \node[left,rotate=90] at (5.4,7800) {\textsf{dragon}};
      \end{axis}
    \end{tikzpicture}
  \caption{Runtime comparison of our algorithms on real and generated images. For generated images, runtimes are strongly linear in the number of pixels: for the taxicab distance the slope is $12.4$ seconds per million pixels and the correlation is $0.9999$, while the Euclidean distance yields a slope of $187$ seconds per million pixels and correlation $0.9996$. }
  \label{runtimeComparison}
\end{figure}

We also observe that the runtimes for generated images show a strong linear relationship, while the runtimes for real images are larger and more scattered.
The runtime has to do more with the \emph{structure} of the image than the number of pixels.
In general, the real images exhibit patterns of light and dark pixels that produce large entrance sets (i.e., containing many entrance bigrades) for many pixels.
However, images with color values more uniformly scattered throughout, such as our generated images, produce relatively small entrance sets per pixel.

We offer some specific examples.
The datapoints labeled in \cref{runtimeComparison} correspond with the images displayed in \cref{imageSamples}.
We observe that the images \textsf{cyberpunk}, \textsf{dragon}, and \textsf{harmonics} have higher runtimes than other real images with similar numbers of pixels. 
The images \textsf{cyberpunk} and especially \textsf{dragon} contain a dominant light source that contrasts with darker regions elsewhere in the image. 
This results in large entrance sets, on average, for pixels in these images, as recorded in \cref{avgBigradesReal}.
Images such as \textsf{harmonics} display regular patterns of dark and light pixels, which also results in large entrance sets.
On the other hand, the image \textsf{rollsroyce} has a relatively small runtime despite a large number of pixels. This is because its light and dark pixels produce relatively small entrance sets, as shown in \cref{avgBigradesReal}.

\begin{figure}[h]
    \centering
    \begin{tabular}{cc}
        \includegraphics[height=1in]{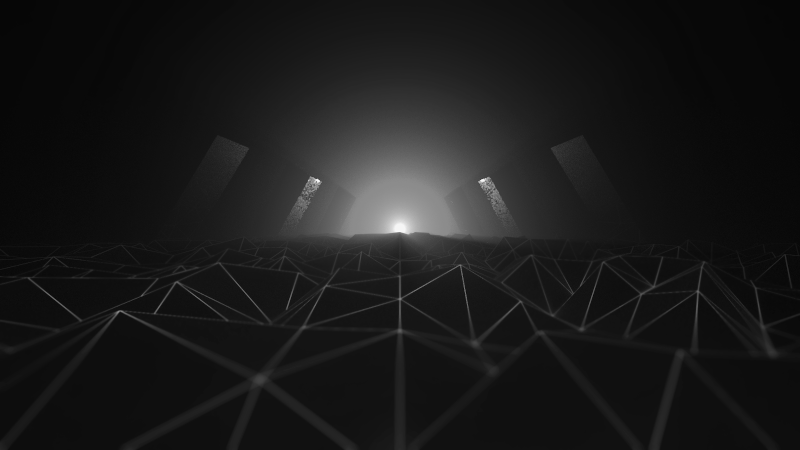} & \includegraphics[height=1in]{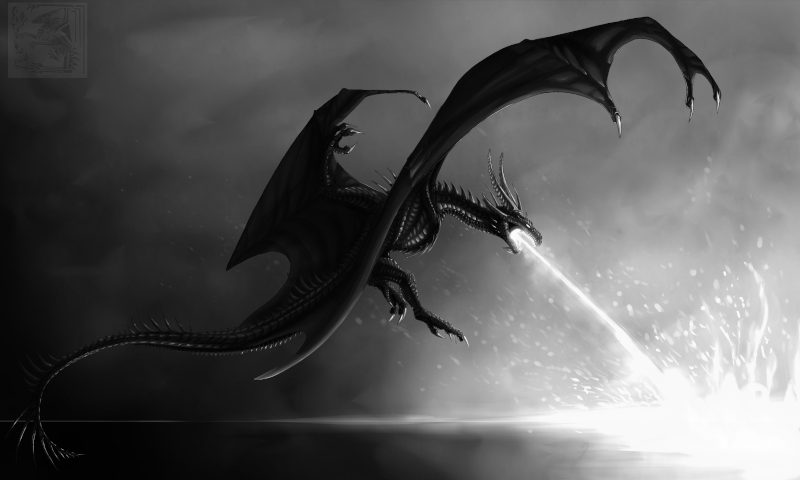}  
        \\
        \textsf{cyberpunk} & \textsf{dragon} \\
        
        \ & \ \\
        
        \includegraphics[height=1in]{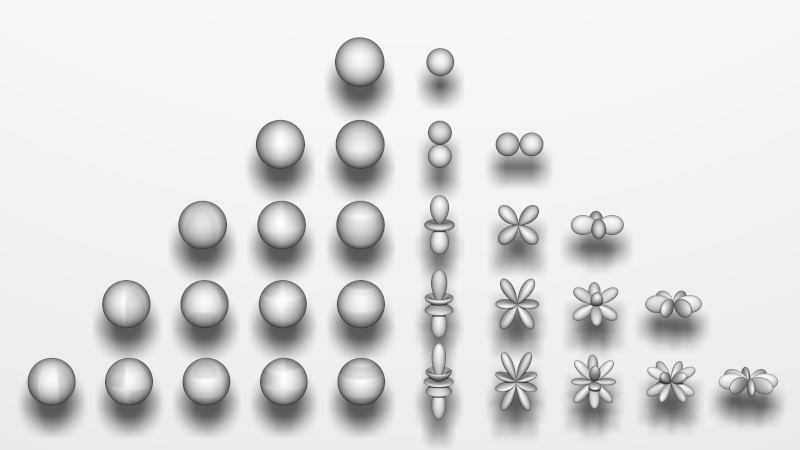} & \includegraphics[height=1in]{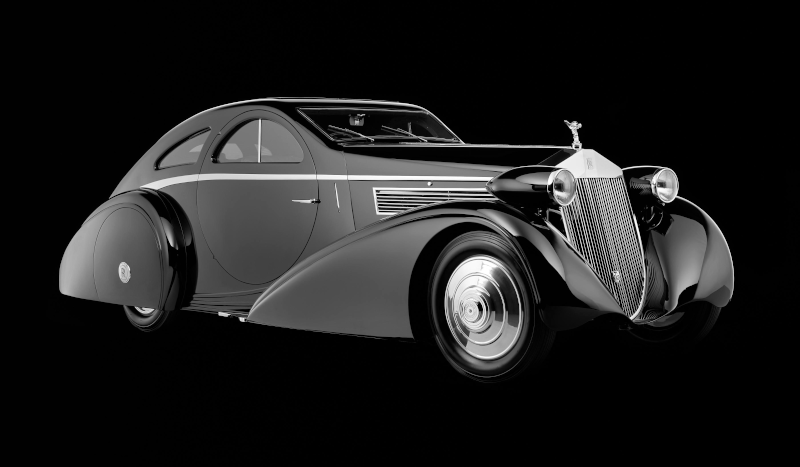} \\
        \textsf{harmonics} & \textsf{rollsroyce}
    \end{tabular}
    \caption[images]{Four of the ``real'' images used in our experiments.\footnotemark }
    \label{imageSamples}
\end{figure}

\begin{table}[h]
    \centering
    \begin{tabular}{lccc}\toprule
        & & \multicolumn{2}{c}{avg.\ num.\ entrance points per pixel} \\ \cmidrule(lr){3-4} 
        image & num.\ pixels & taxicab distance & Euclidean distance \\ \midrule
        generated & --- & 8.95 & 9.67 \\
        \textsf{cyberpunk} & 2,073,600 & 73.3 & 88.4  \\
        \textsf{dragon} & 5,400,000 & 89.0 & 97.6 \\
        \textsf{rollsroyce} & 7,560,000 & 28.0 & 43.9 \\
        \textsf{harmonics} & 8,294,400 & 70.2 & 83.8 \\
        \bottomrule
    \end{tabular}
    \caption{Average number of entrance points per pixel for selected images. The first line displays the averages over eleven generated images of various pixel sizes.}
    \label{avgBigradesReal}
\end{table}

Next we compare the average entrance set sizes with respect to the taxicab and Euclidean distances, as shown in \cref{avgBigradesReal} and \cref{avgNumBigrades}.
We see that the average number of entrance points per pixel is always smaller for the taxicab distance than for Euclidean distance.
The image \textsf{dragon} has the most entrance points per pixel, while \textsf{cyberpunk} and \textsf{harmonics} are not far behind.
However, \textsf{rollsroyce} has the fewest entrance points per pixel with respect to the taxicab distance and second-fewest with respect to the Euclidean distance.
The average number of entrance points per pixel is nearly constant across our set of eleven generated images, with a standard deviation of $0.013$ with respect to the taxicab distance and $0.0098$ with respect to the Euclidean distance.

\footnotetext{ Image sources: \textsf{cyberpunk} \url{https://dlpng.com/png/6389782}, \textsf{dragon} \url{https://wallpapersden.com/dragon-burning-flames-wallpaper}, \textsf{harmonics} \url{https://commons.wikimedia.org/wiki/File:Cubicharmonics_3840x2160.png}, \textsf{rollsroyce} \url{https://www.carthrottle.com/post/wql2dlq/} }

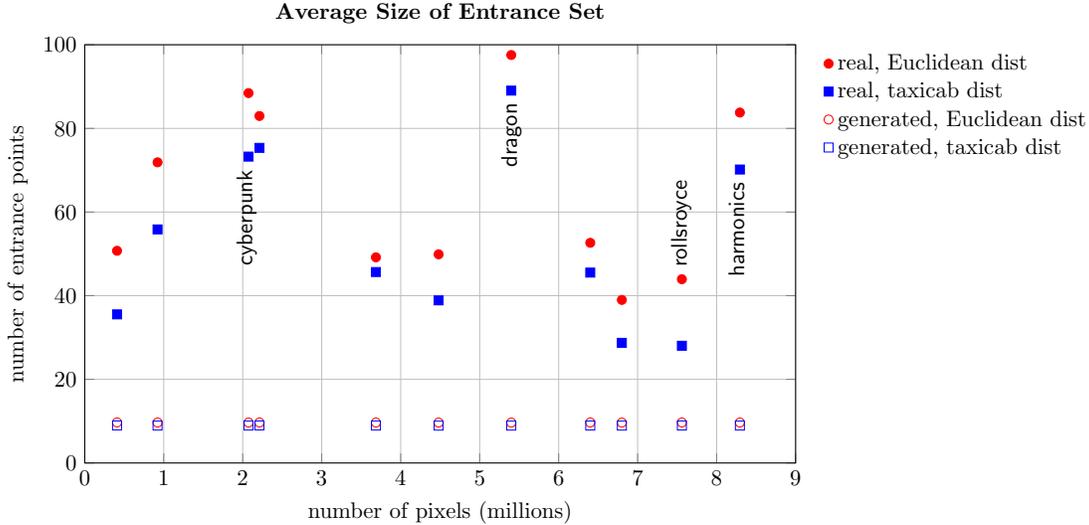
\begin{figure}[h!] 
  \centering
    \begin{tikzpicture}[scale=0.85]
      \begin{axis}[width=5in, height=3.2in, title={\textbf{Average Size of Entrance Set}}, xlabel={number of pixels (millions)}, ylabel={number of entrance points}, xmin=0, xmax=9, ymin=0, ymax=100, grid, legend pos=outer north east, legend cell align=left, legend style={draw=none}]
        \addplot+[only marks,mark=*,red,mark options={fill=red}] table[x=pixels,y=realDC]{entranceSetSizes.data};
        \addlegendentry{real, Euclidean dist}
      
        \addplot+[only marks,mark=square*,blue, mark options={fill=blue}] table[x=pixels,y=real4C]{entranceSetSizes.data};
        \addlegendentry{real, taxicab dist}
        
        \addplot+[only marks,mark=o,red] table[x=pixels,y=genDC]{entranceSetSizes.data};
        \addlegendentry{generated, Euclidean dist}
        
        \addplot+[only marks,mark=square,blue] table[x=pixels,y=gen4C]{entranceSetSizes.data};
        \addlegendentry{generated, taxicab dist}
        
        \node[left,rotate=90] at (8.25,69) {\textsf{harmonics}};
        \node[right,rotate=90] at (7.55,45) {\textsf{rollsroyce}};
        \node[left,rotate=90] at (2.05,72) {\textsf{cyberpunk}};
        \node[left,rotate=90] at (5.4,88) {\textsf{dragon}};
        
      \end{axis}
    \end{tikzpicture}

  \caption{Average number of entrance points per pixel, for real and generated images with respect to taxicab and Euclidean distances.}
  \label{avgNumBigrades}
\end{figure}

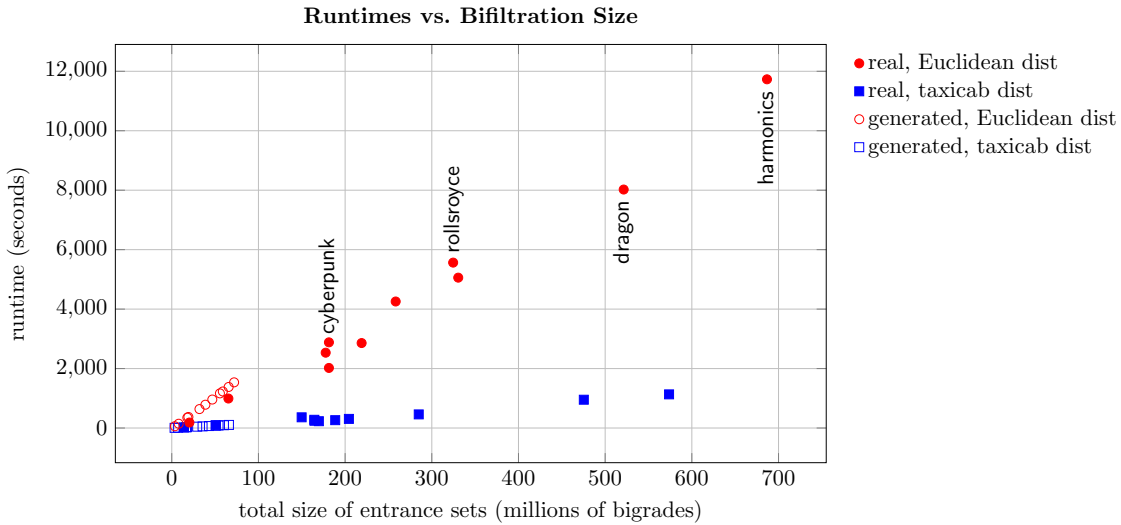
\begin{figure}[h!] 
  \centering
    \begin{tikzpicture}[scale=0.85]
      \begin{axis}[width=5in, height=3.2in, title={\textbf{Runtimes vs.\ Bifiltration Size}}, xlabel={total size of entrance sets (millions of bigrades)}, ylabel={runtime (seconds)}, scaled y ticks=false, grid, legend pos=outer north east, legend cell align=left, legend style={draw=none}, scatter/classes={realDC={mark=*,red},real4C={mark=square*,blue},genDC={mark=o,draw=red},gen4C={mark=square,draw=blue}}]
        \addplot[scatter, only marks, scatter src=explicit symbolic] table[x=bigrades, y=time, meta=type]{runtimesVsBigrades.data};
        \addlegendentry{real, Euclidean dist}
        \addlegendentry{real, taxicab dist}
        \addlegendentry{generated, Euclidean dist}
        \addlegendentry{generated, taxicab dist}
        
        \node[left,rotate=90] at (685, 11600) {\textsf{harmonics}};
        \node[left,rotate=90] at (521, 7940) {\textsf{dragon}};
        \node[right,rotate=90] at (325, 5620) {\textsf{rollsroyce}};
        \node[right,rotate=90] at (181, 2900) {\textsf{cyberpunk}};
      \end{axis}
    \end{tikzpicture}
  \caption{The runtimes of our algorithms are approximately linear in the total number of bigrades in the entrance sets for all pixels in the value-offset bifiltration.}
  \label{runtimeVsBigrades}
\end{figure}

The entrance set sizes explain the differences in runtime.
\Cref{runtimeVsBigrades} shows how our runtimes depend on the total size of the bifiltration (that is, the total number of bigrades for all pixels).
For real and generated images, with each notion of distance, we find that the runtimes are nearly linear in the total number of bigrades.
As before, runtimes of the generated images show the strongest correlation: with respect to the taxicab distance the slope is $1.56$ seconds/million bigrades with correlation $1.0000$; with respect to the Euclidean distance the slope is $21.56$ seconds/million bigrades with correlation $0.9997$.
For real images with respect to the taxicab distance, the slope is $2.02$ seconds/million bigrades with correlation $0.985$; for real images with respect to the Euclidean distance the slope is $17.12$ seconds/million bigrades with correlation $0.993$.

\subsection{Worst Case Examples}\label{worstCase}

Based on our observations that images such as \textsf{dragon} and \textsf{harmonics} have rather large entrance sets per pixel, we found two types of images whose bifiltrations have the largest possible entrance sets for all pixels.

Specifically, we constructed images that have the highest value pixels concentrated in their centers, decreasing to low-value pixels in their edges and corners. \Cref{worstCaseExamples} (a) shows an example of such an image, which we call a \emph{centralized} image.
We also constructed images with constant values along diagonals, such that adjacent diagonals have values that differ by 1, as shown in \Cref{worstCaseExamples} (b). 
We call these \emph{diagonal} images.

\begin{figure}[h]
  \centering
  \begin{subfigure}[b]{0.4\textwidth}
    \centering
    \begin{tikzpicture}[scale=0.5] 
        \fill[black!90] (0,0) rectangle ++(6,6);  
        \fill[black!60] (1,0) rectangle ++(4,6);  
        \fill[black!60] (0,1) rectangle ++(6,4);  
        \fill[black!30] (1,1) rectangle ++(4,4);  
        \fill[white] (2,2) rectangle ++(2,2);     
        
        \draw[thick] (0,0) rectangle (6,6);
        \foreach \i in {1,...,5} {
          \draw (\i,0) -- (\i,6);
          \draw (0,\i) -- (6,\i);
        }
        
        \foreach \p in {(0.5,0.5),(5.5,0.5),(0.5,5.5),(5.5,5.5)} {
          \node[white] at \p {0};
        }
        \foreach \p in {(1.5,0.5),(2.5,0.5),(3.5,0.5),(4.5,0.5),(0.5,1.5),(0.5,2.5),(0.5,3.5),(0.5,4.5),(1.5,5.5),(2.5,5.5),(3.5,5.5),(4.5,5.5),(5.5,1.5),(5.5,2.5),(5.5,3.5),(5.5,4.5)} {
          \node[white] at \p {1};
        }
        \foreach \p in {(1.5,1.5),(2.5,1.5),(3.5,1.5),(4.5,1.5)  ,(1.5,2.5),(1.5,3.5),(4.5,2.5),(4.5,3.5),(1.5,4.5),(2.5,4.5),(3.5,4.5),(4.5,4.5)} {
          \node at \p {2};
        }
        \foreach \p in {(2.5,2.5),(2.5,3.5),(3.5,2.5),(3.5,3.5)} {
          \node at \p {3};
        }
        
    \end{tikzpicture}
    \caption{}
  \end{subfigure}
  \begin{subfigure}[b]{0.4\textwidth}
    \centering
    \begin{tikzpicture}[scale=0.5] 
        \fill[black!30] (0,3) -- (1,3) -- (1,4) -- (2,4) -- (2,5) -- (3,5) -- (3,6) -- (0,6) -- cycle; 
        \fill[black!60] (0,4) -- (1,4) -- (1,5) -- (2,5) -- (2,6) -- (0,6) -- cycle; 
        \fill[black!90] (0,5) rectangle ++(1,1);  
        
        \fill[black!30] (0,0) -- (4,0) -- (4,1) -- (5,1) -- (5,2) -- (6,2) -- (6,3) -- (6,6) -- (4,6) -- (4,5) -- (3,5) -- (3,4) -- (2,4) -- (2,3) -- (1,3) -- (1,2) -- (0,2) -- cycle; 
        
        \fill[black!60] (0,0) -- (3,0) -- (3,1) -- (4,1) -- (4,2) -- (5,2) -- (5,3) -- (6,3) -- (6,6) -- (5,6) -- (5,5) -- (4,5) -- (4,4) -- (3,4) -- (3,3) -- (2,3) -- (2,2) -- (1,2) -- (1,1) -- (0,1) -- cycle;  
        
        \fill[black!90] (1,0) rectangle ++(1,1);  
        \fill[black!90] (2,1) rectangle ++(1,1);  
        \fill[black!90] (3,2) rectangle ++(1,1);  
        \fill[black!90] (4,3) rectangle ++(1,1);  
        \fill[black!90] (5,4) rectangle ++(1,1);  
        
        \fill[black!30] (5,0) rectangle ++(1,1);  
        
        \draw[thick] (0,0) rectangle (6,6);
        \foreach \i in {1,...,5} {
          \draw (\i,0) -- (\i,6);
          \draw (0,\i) -- (6,\i);
        }
        
        \foreach \p in {(0.5,5.5),(1.5,0.5),(2.5,1.5),(3.5,2.5),(4.5,3.5),(5.5,4.5)} {
          \node[white] at \p {0};
        }
        \foreach \p in {(0.5,4.5),(1.5,5.5),(0.5,0.5),(1.5,1.5),(2.5,2.5),(3.5,3.5),(4.5,4.5),(5.5,5.5),(2.5,0.5),(3.5,1.5),(4.5,2.5),(5.5,3.5)} {
          \node[white] at \p {1};
        }
        \foreach \p in {(0.5,3.5),(1.5,4.5),(2.5,5.5),(0.5,1.5),(1.5,2.5),(2.5,3.5),(3.5,4.5),(4.5,5.5),(3.5,0.5),(4.5,1.5),(5.5,2.5),(5.5,0.5)} {
          \node at \p {2};
        }
        \foreach \p in {(0.5,2.5),(1.5,3.5),(2.5,4.5),(3.5,5.5),(4.5,0.5),(5.5,1.5)} {
          \node at \p {3};
        }
        
    \end{tikzpicture}
        \caption{}
  \end{subfigure}
    
  \caption{Examples of \emph{centralized} (a) and \emph{diagonal} (b) images with grayscale values $0, 1, 2, 3$.}
   \label{worstCaseExamples}
\end{figure}
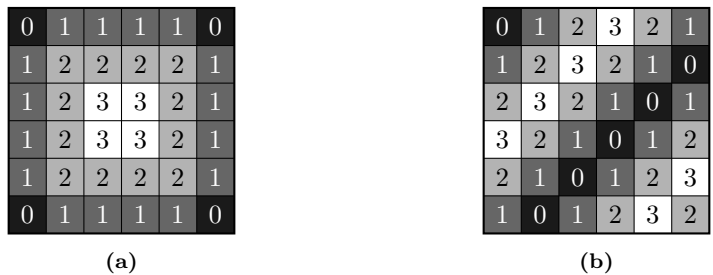

The key property is that the values along any pixel path from a pixel of lowest value to highest value are monotonically increasing. 
This property, exhibited by centralized and diagonal images (as well as many other images), implies that each pixel has an entrance point for each function value.
Since no entrance set can have more bigrades than the number of distinct function values, these images provide worst-case examples of the size of the value-offset bifiltration.

\section{Future work}\label{futureWork}

The goal of this project has been to develop the value-offset bifiltration as a foundation to support two-parameter persistent homology analysis of digital image data. 
As a next step, we plan to compute two-parameter persistent homology of value-offset bifiltrations.
For this, we must incorporate our algorithms from this paper into a two-parameter persistence computational pipeline, such as the RIVET software \cite{rivet}.
The remaining work to this end consists primarily of implementation details rather than the development of theory or algorithms.
The entrance sets computed in this paper are the input required for constructing bifiltered complexes, of which existing algorithms can compute persistent homology.

Further questions remain about the value-offset bifiltration, especially with respect to distances other than the taxicab distance.
For example, how could one compute the exact entrance sets with respect to the Euclidean distance? We do not know an efficient way to do this. 
Alternatively, it would be interesting to consider other metrics on sets of pixels.
Lastly, it seems natural to generalize our work to higher-dimensional cubical data.
For example, we could compute a value-offset bifiltration from three-dimensional voxels with function values, perhaps obtained via 3-D scanning technology. 
This would seem to open new avenues for topological analysis of high-dimensional digital data.

\section*{Code}

The code for our algorithms and experiments described in this paper is available at: \\ \url{https://github.com/ThongVoHien/TopologyAnalysis}

\section*{Acknowledgements}

This research was carried out while the first two authors were undergraduate students at St.\ Olaf College. We thank St.\ Olaf College, and especially the MSCS Department, for their support of undergraduate research.
We also thank the anonymous reviewers, whose comments and suggestions improved this paper.

{\footnotesize

}

\end{document}